\theoremstyle{plain}
\newtheorem{theorem}{Theorem}[section]
\newtheorem{proposition}[theorem]{Proposition}
\newtheorem{lemma}[theorem]{Lemma}
\theoremstyle{definition}
\newtheorem{definition}[theorem]{Definition}
\theoremstyle{remark}
\newtheorem{remark}[theorem]{Remark}
\icmltitlerunning{RetrievalGuard: Provably Robust 1-Nearest Neighbor Image Retrieval}
\begin{document}

\twocolumn[
\icmltitle{RetrievalGuard: Provably Robust 1-Nearest Neighbor Image Retrieval}




\begin{icmlauthorlist}
\icmlauthor{Yihan Wu}{pitt}
\icmlauthor{Hongyang Zhang}{waterloo}
\icmlauthor{Heng Huang}{pitt}
\end{icmlauthorlist}

\icmlaffiliation{pitt}{Department of Electrical and Computer Engineering, University of Pittsburgh, USA}
\icmlaffiliation{waterloo}{David R. Cheriton School of Computer Science, University of Waterloo, Canada}

\icmlcorrespondingauthor{Yihan Wu}{yiw154@pitt.edu}
\icmlcorrespondingauthor{Hongyang Zhang}{hongyang.zhang@uwaterloo.ca}
\icmlcorrespondingauthor{Heng Huang}{henghuanghh@gmail.com}

\icmlkeywords{Adversarial Robustness}

\vskip 0.3in
]



\printAffiliationsAndNotice{} 

\begin{abstract}
 Recent research works have shown that image retrieval models are vulnerable to adversarial attacks, where slightly modified test inputs could lead to problematic retrieval results.  In this paper, we aim to design a provably robust image retrieval model which keeps the most important evaluation metric Recall@1 invariant to adversarial perturbation. We propose the first 1-nearest neighbor (NN) image retrieval algorithm, RetrievalGuard, which is provably robust against adversarial perturbations within an $\ell_2$ ball of calculable radius. 
 The challenge is to design a provably robust algorithm that takes into consideration the 1-NN search and the high-dimensional nature of the embedding space.
 Algorithmically, given a base retrieval model and a query sample, we build a smoothed retrieval model by carefully analyzing the 1-NN search procedure in the high-dimensional embedding space. We show that the smoothed retrieval model has bounded Lipschitz constant and thus the retrieval score is invariant to $\ell_2$ adversarial perturbations. Experiments on image retrieval tasks validate the robustness of our RetrievalGuard method.
\end{abstract}

\section{Introduction}

Image retrieval has been an important and active research area in computer vision with broad applications, such as person re-identification \cite{zheng2015scalable}, remote sensing \cite{chaudhuri2019siamese}, medical image search \cite{nair2020review}, and shopping recommendation \cite{liu2016deepfashion}.
In a typical image retrieval task, given a query image, the image retrieval algorithm selects semantically similar images from a large gallery. To conduct efficient retrieval, the high-dimensional images are often encoded into an embedding space by deep neural networks (DNNs). The encoder is expected to cluster semantically similar images while separating dissimilar images.

Despite a large amount of works on image retrieval, many fundamental questions remain unresolved. For example, DNNs are notorious for their vulnerability to adversarial examples \cite{szegedy2013intriguing,biggio2013evasion,yang2020design,blum2022boosting,zhang2022many}, \emph{i.e.} slightly modified test inputs can lead to the largely changed and incorrect prediction results.
In image retrieval, the encoders are oftentimes parameterized by DNNs and existing approaches are susceptible to adversarial attacks. Though adversarial training alleviates the issue by building a backbone that is robust against off-the-shelf attacks~\cite{zhang2019theoretically}, the backbone is not certifiably robust against attacks with growing power. In fact, there has been long-standing arms race between adversarial defenders and attackers: defenders design empirically robust algorithms which are later exploited by new attacks designed to undermine those defenses~\cite{athalye2018obfuscated}. Moreover, existing defenses \cite{panum2021exploring,zhou2020adversarial} only focus on one type of attacks and fail to generalize to other types of attacks.
Thus, it is desirable to develop more powerful defenses for image retrieval with \textbf{provable} adversarial robustness.

For image classification tasks, there are two types of provably robust methods against adversarial perturbations. The first type is deterministic approaches represented by linear relaxations \cite{zhang2020towards}, mixed-integer linear programming \cite{tjeng2018evaluating}, and Lipschitz constant estimation \cite{zhang2019recurjac}. But these approaches only work with certain neural architectures and are hard to train. The second category is randomized smoothing  \cite{cohen2019certified,li2019certified}, which provides probabilistic robustness guarantees. The insight behind randomized smoothing is a construction of smoothed model $g$ by voting the prediction of vanilla model $h$ over a smoothing distribution. The smoothed model $g$ is provably Lipschitz bounded. Additionally, randomized smoothing is model-agnostic and can be applied to arbitrary backbones. Although recent works show that randomized smoothing suffers from curse of dimensionality \cite{blum2020random,kumar2020curse,wu2021completing}, the method remains the state-of-the-art certified defense against adversarial perturbation.

\noindent\textbf{Challenges.}
Unfortunately, direct application of randomized smoothing does not work in our setting of image retrieval. Randomized smoothing is carefully designed for classification tasks, where the output of the model is a discrete label. However, in the image retrieval task, the output of the model is a high-dimensional embedding vector. It remains unclear how to implement the ``voting'' operation for the embedding vector. In addition, retrieval results are computed by comparing the distance between the embedding of query images and gallery images and finding the nearest neighbor, but randomized smoothing was not designed for this procedure. Therefore, many observations and techniques for randomized smoothing break down when we consider more sophisticated image retrieval tasks.

\noindent\textbf{Our setting.} 
In the image retrieval tasks, we search for semantically similar images in a large reference set for a given query image. The quality of an image retrieval model can be measured by the Recall@k score: given a reference set $R$, a query sample $x$ and an embedding mapping $h(\cdot)$, the Recall@k of sample $x$ is 1 if the first $k$ nearest neighbors of $h(x)$ in $R$ contains at least one sample with the same class as $x$; otherwise, Recall@k of $x$ is 0. We expect the retrieval score to be 1 for as many query samples as possible. Among Recall@k, perhaps the most widely-used metric is Recall@1. Our goal is to design a provably robust retrieval model which keeps the metric Recall@1 invariant to adversarial perturbation, \emph{i.e.}, the nearest neighbor of $x$ is of the same class as $x$ even in the presence of $\ell_2$ bounded perturbations.

\medskip
\noindent\textbf{Summary of contributions.}
Our work explores the adversarial robustness of 1-NN image retrieval.
\begin{itemize}
    \item
    Algorithmically, we propose RetrievalGuard, the first provably robust 1-NN image retrieval framework against $\ell_2$ adversarial perturbations. Given an input and a base embedding mapping, our algorithm averages the embeddings of Gaussian-perturbed inputs to achieve the robustness and conducts 1-NN search based on the smoothed embedding.
    \item
    Theoretically, we analyze     RetrievalGuard by new proof techniques regarding the 1-NN search and the smoothed high-dimensional embedding. We show that the smoothed embedding is Lipschitz with a tight and calculable Lipschitz bound. Additionally, we analyze the Monte-Carlo method for computing the certified radius of each input. The algorithmic error only logarithmically depends on the dimension of the embedding space. Our analysis of smoothed embedding might be of independent interest to other computer vision tasks more broadly.
    \item
    Experimentally, we evaluate the certified robustness and accuracy of RetrievalGuard on popular image retrieval benchmarks under different choices of dimension of embedding space, number of Monte-Carlo samplings, and variance of Gaussian noise.
\end{itemize}

\section{Related Works}

  \noindent\textbf{Deep metric learning.} 
  Deep metric learning (DML) is one of the most popular methods used for image retrieval. It learns semantic embedding of images by putting the feature vectors of similar samples closer in the embedding space while separating the feature vectors of dissimilar samples.
  There are two types of metric losses in DML, tuple-based loss and classification-based loss. Tuple-based loss characterizes the distance between similar and dissimilar image embedding, which includes triplet loss \cite{schroff2015facenet}, margin loss \cite{wu2017sampling}, and multi-similarity loss \cite{wang2019multi}. 
Classification-based loss is designed with a fixed \cite{boudiaf2020unifying} or learnable proxy \cite{kim2020proxy}, where the proxy refers to a subset of training data. 
However, the performance of different DML losses are similar under the same training settings~\cite{roth2020revisiting,musgrave2020metric}. In this work, we choose a broadly used DML model, DML with margin loss~\cite{wu2017sampling}, as the base image retrieval model in our experiments.

\medskip
\noindent\textbf{Image retrieval attacks.} \cite{bouniot2020vulnerability,wang2020transferable} designed metric-based attacks for person re-identification tasks, where the adversarial samples were generated by maximizing the distance between similar pairs and minimizing the distance between dissimilar pairs. 
\cite{feng2020adversarial} attacked a type of image retrieval method, deep product quantization network, by
generating perturbations from the peak of the Centroid Distribution, which is the estimation of the probability distribution of codewords assignment.
\cite{li2019universal} introduced a universal perturbation attack on image retrieval to break the neighborhood relationships of image features via degrading the corresponding ranking metric.
\cite{zhou2020adversarial} proposed image ranking candidate attack and query attack, which can raise or lower the rank of selected candidates by adversarial perturbations.
  
\medskip
\noindent\textbf{Randomized smoothing.} If the prediction of a model on sample $x$ does not change in the presence of perturbations with bounded radius $r$, this model is said to be certifiably robust on sample $x$ with radius $r$. 
To the best of our knowledge, randomized smoothing \cite{cohen2019certified,Lecuyer2019,li2019certified,salman2019provably} is currently the only approach that provides certified robustness in a model-agnostic way. Applications of randomized smoothing include image classification \cite{cohen2019certified}, graph classification \cite{bojchevski2020efficient}, and point cloud classification \cite{liu2021pointguard}, with $\ell_0$, $\ell_2$ and $\ell_\infty$ robustness guarantees.
In the image classification tasks, randomized smoothing transforms a base classifier $f$ to a smoothed classifier $g$, which is certifiably robust in an $\ell_2$ ball. More specifically, given a sample $x$ and arbitrary binary classifier $f$ which maps inputs in $\mathbb{R}^d$ to a class in $\{0,1\}$, the smoothed classifier $g$ labels $x$ as the majority vote of predictions of $f$ on the Gaussian-perturbed images $\mathcal{N}(x,\sigma^2I_d)$. In particular, let
$$g(x) = \mathbb{P}(\{z|f(x+z)=c_x\}),$$ where $c_x$ is the  label of sample $x$,
  we expect $g(x)>0.5$ to correctly classify $x$. An important property of the smoothed classifier $g$ is its $L$-Lipschitzness \emph{w.r.t.} $\ell_2$ norm \cite{salman2019provably}. With this property, for arbitrary perturbation $\delta$ such that $\|\delta\|_2<r$, the difference between two prediction scores $g(x)$ and $g(x+\delta)$ is bounded by $L\|x-(x+\delta)\|_2<Lr$. Thus 
  if $g(x)>0.5$, we can choose a small $r$, namely, $r=(g(x)-0.5)/2L$, such that $g(x+\delta)>0.5$ for all perturbations $\delta$ with $\|\delta\|_2<r$. In \cite{cohen2019certified}, the authors obtained a tighter certified radius $r = \sigma\Phi^{-1}(\underline{g(x)})$, where $\underline{g(x)}$ is a probabilistic lower bound of $g(x)$ and $\Phi$ is the cumulative distribution function of standard Gaussian distribution.

\medskip
\noindent\textbf{Novelties and difference of our method from randomized smoothing.}
In this work, we focus on a different problem from classification, namely, 1-NN image retrieval. Unlike the binary classification tasks, where the output of the base model $f: \mathbb{R}^d\to \{0,1\}$ is a discrete one-dimensional scalar, the base model in the 1-NN retrieval task is an embedding model $h: \mathbb{R}^d\to \mathbb{R}^k$ with a high-dimensional output. Therefore, directly applying randomized smoothing to our image retrieval problem does not work. Instead, we propose a new proof technique and demonstrate that we can build a smoothed embedding $g:\mathbb{R}^d\to \mathbb{R}^k$ that is Lipschitz continuous. Algorithimcally, different from voting by majority as in randomized smoothing, our algorithm is built upon averaging the embedding of Gaussian-perturbed inputs. We carefully analyze the nearest neighbor of a query sample in the positive and negative reference sets of embedding space, such that the nearest neighbor is stable to adversarial perturbation in the input space. Our analysis of smoothed embedding might be of independent interest to other representation learning tasks more broadly.

  \section{RetrievalGuard}
  In this section, we will introduce our method of building a certifiably robust embedding for the 1-NN image retrieval task. Denote the embedding model by $h$. For each sample $x$, we will first calculate its embedding $h(x)$. We then search for the sample $x'$ in the reference set whose embedding $h(x')$ is closest to $h(x)$. If the ground-truth labels of $x$ and $x'$ match, the retrieval score is 1; otherwise, the retrieval score is 0. Our goal is to build a certifiably robust retrieval model, such that the retrieval score is invariant to arbitrary $\ell_2$ bounded perturbations. All proofs of this section can be found in the Appendix.
  
    \medskip
  \noindent\textbf{Intuition of RetrievalGuard.} 
  RetrievalGuard is an approach to build a provably robust image retrieval from a vanilla image retrieval model.
  In RetrievalGuard, we will build a smoothed retrieval model by averaging the embedding of the given model, and calculate the robustness guarantee for the smoothed model based on its Lipschitz continuous property. We want to emphasize that the given model doesn't have any robustness guarantee.
\vspace{-0.3cm}
  \subsection{Robustness guarantee with 1-NN retrieval}\label{sec:guarantee}
  \vspace{-0.1cm}
  Let $R_x$ be the subset of reference $R$ in which the samples have the same label as $x$, and let $R/R_x$ be its complement in $R$.
  We note that the 1-NN retrieval score of a sample $x$ depends only on its nearest embedding in $R_x$ and $R/R_x$. For arbitrary encoder $h$, if the distance between $h(x)$ and its nearest embedding in $R_x$ is smaller than the distance between $h(x)$ and its nearest embedding in $R/R_x$, the 1-NN retrieval score is 1; otherwise, the score is 0. To use this property, we have the following definition of minimum margin.
  \begin{definition}\label{def:d} (Minimum margin) 
  $$d(x;h)\hspace{-0.1cm}:=\hspace{-0.2cm}\min_{x_2\in R/R_x}\hspace{-0.1cm} ||h(x)-h(x_2)||_2 -  \min_{x_1\in R_x}\hspace{-0.1cm}||h(x)-h(x_1)||_2,$$
  where $h$ is arbitrary embedding model.
  \end{definition}
  If $d(x;h)>0$, the retrieval score of $x$ is 1 and otherwise 0. 
  In this work, we only consider the certified robustness of ``correctly-retrieved samples'', i.e., samples with retrieval score 1. Thus, in order to make the retrieval score of $x+\delta$ invariant to the perturbation $\delta$, we expect $d(x+\delta;h)>0$. 
  In the following theorem, we show an important property of $\delta$, with which the retrieval score is invariant to adversarial perturbation.
  \begin{lemma}\label{thm:1}
  For any embedding model $h$, if the retrieval score of $x$ w.r.t. $h$ is 1 and $$||h(x)-h(x+\delta)||_2<\frac{d(x;h)}{2},$$ the retrieval score of $x+\delta$ w.r.t. $h$ is also 1.
  \end{lemma}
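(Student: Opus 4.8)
The plan is to reduce the claim to showing that the minimum margin stays positive under the perturbation, i.e.\ that $d(x+\delta;h)>0$, since by the discussion preceding Definition~\ref{def:d} this is exactly equivalent to the retrieval score of $x+\delta$ being $1$. Throughout I would keep the reference partition fixed: because $x+\delta$ shares the ground-truth label of $x$, we have $R_{x+\delta}=R_x$, so the two minima defining $d(x+\delta;h)$ range over the same index sets $R_x$ and $R/R_x$ as those defining $d(x;h)$. Write $\epsilon:=||h(x)-h(x+\delta)||_2$, $a:=\min_{x_1\in R_x}||h(x)-h(x_1)||_2$, and $b:=\min_{x_2\in R/R_x}||h(x)-h(x_2)||_2$, so that $d(x;h)=b-a>0$ by hypothesis.

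The core of the argument is two applications of the triangle inequality in opposite directions, propagating the $\epsilon$-shift of the query embedding into the two min-distances. For the positive set, for every $x_1\in R_x$ we have $||h(x+\delta)-h(x_1)||_2\le \epsilon+||h(x)-h(x_1)||_2$, and taking the minimum over $x_1\in R_x$ gives $\min_{x_1\in R_x}||h(x+\delta)-h(x_1)||_2\le a+\epsilon$. For the negative set, the reverse triangle inequality gives $||h(x+\delta)-h(x_2)||_2\ge ||h(x)-h(x_2)||_2-\epsilon$ for every $x_2\in R/R_x$, and taking the minimum over $x_2$ yields $\min_{x_2\in R/R_x}||h(x+\delta)-h(x_2)||_2\ge b-\epsilon$.

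Subtracting the two bounds gives $d(x+\delta;h)\ge (b-\epsilon)-(a+\epsilon)=d(x;h)-2\epsilon$, and the hypothesis $\epsilon<d(x;h)/2$ forces $d(x+\delta;h)>0$, completing the proof. There is no real obstacle here beyond bookkeeping; the only point that deserves care is the factor of $2$ in the threshold, which arises precisely because both the nearest-same-class distance and the nearest-different-class distance can each move by as much as $\epsilon$ under the perturbation, so the margin can erode by up to $2\epsilon$. I would also flag the monotonicity step—that a uniform upper (resp.\ lower) bound for each individual term survives passing to the minimum—since this is where the nearest-neighbor structure is actually used, and it is the feature that later lets us replace $\epsilon$ by the Lipschitz control on the smoothed embedding $g$.
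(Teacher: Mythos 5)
Your proof is correct and follows essentially the same route as the paper's: both arguments rest on the same two triangle inequalities (upper-bounding the perturbed query's distance to the nearest same-class embedding, lower-bounding its distance to every different-class embedding), with the factor of $2$ arising in the identical way. The only cosmetic difference is that you package the conclusion as the quantitative margin bound $d(x+\delta;h)\ge d(x;h)-2\epsilon>0$, whereas the paper chains the inequalities to show that the fixed nearest positive $x^{+}$ remains strictly closer to $h(x+\delta)$ than every negative sample; the two formulations are logically interchangeable.
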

  \begin{proof} Recall $R_x$ is the subset of the reference set $R$ in which the samples have the same label as $x$. With the classifier $h$, denote the nearest embedding of sample $x$ in $R_x$ by $x^+$, 
  it is not hard to observe
  $$d(x,h) = \min_{y\in R/R_x}||h(x)-h(y)||_2 -  ||h(x)-h(x^+)||_2,$$
  thus we have
  $$||h(x)-h(x^+)||_2\leq ||h(x)-h(y)||_2 -  d(x,h), \forall y\in R/R_x.$$
  For an arbitrary $y\in R/R_x$, we will show $||h(x+\delta)-h(x^+)||_2 <||h(x+\delta)-h(y)||_2$, if $||h(x)-h(x+\delta)||_2<\frac{d(x;h)}{2}.$
  \begin{align*}
      &||h(x+\delta)-h(x^+)||_2\\&\leq ||h(x+\delta)-h(x)||_2 + ||h(x)-h(x^+)||_2\\
      &< \frac{d}{2}+ ||h(x)-h(x^+)||_2\\
      &\leq \frac{d}{2}+ ||h(x)-h(y)||_2 -  d\\
      &\leq ||h(x)-h(y)||_2 -  \frac{d}{2}\\
      &< ||h(x)-h(y)||_2 - ||h(x+\delta)-h(x)||_2\\
      &\leq ||h(x+\delta)-h(y)||_2
  \end{align*}
  The nearest embedding of $x+\delta$ might change (not $x^+$), but still have the same label as $x$, which means the 1-NN retrieval score $x+\delta$ is still 1.
  \end{proof}
 
  If $h$ is a Lipschitz continuous embedding, i.e., there exists a constant $L$ such that $$||h(x)-h(y)||_2\leq L||x-y||_2,$$ 
  we can choose perturbation $\delta$ such that its $\ell_2$ norm is bounded by $\frac{d(x;h)}{2L}$. In this case,  $$||h(x)-h(x+\delta)||_2\leq L||x-(x+\delta)||_2<\frac{d(x;h)}{2}.$$
  That is, $h$ is guaranteed to be robust against any perturbation as long as its $\ell_2$ radius is bounded by $\frac{d(x;h)}{2L}$. Our following subsections will focus on designing an embedding model with a bounded Lipschitz constant.
 
\vspace{-0.3cm}
\subsection{Building a Lipschitz continuous model}
\label{sec:lipschitz}
\vspace{-0.1cm}
It is commonly known that deep neural networks are not Lipschitz continuous~\cite{weng2018evaluating}. To build a Lipschitz continuous embedding, one approach is by using Lipschitz preserving layers, e.g., orthogonal convolution neural networks (OCNN) \cite{wang2020orthogonal}. However, OCNN is hard to be trained due to its strong constraints imposed on each layer.
Another approach is by applying randomized smoothing \cite{cohen2019certified} to the base embedding model. It has been shown that in the classification tasks, the smoothed classifier is $\ell_2$-Lipschitz bounded \cite{salman2019provably}. In this work, we show that we can build a smoothed embedding model which also has bounded Lipschitz constant beyond the classification problem. In particular,
given a base embedding model $h:\mathbb{R}^d \to \mathbb{R}^k$, a sample $x$ and a distribution $q$, the smoothed embedding $g$ is given by 
  $$g(x) = \mathbb{E}_{z\sim q}[h(x+z)].$$
  Different from the randomized smoothing method in the classification task, where $h$ and $g$ represent the probability of correct predictions, in the embedding models $h(x)$ and $g(x)$ represent the feature vectors of sample $x$. In the next theorem, we show that if we select $q$ as a Gaussian distribution, the smoothed model $g$ has a bounded Lipschitz constant.
  \begin{lemma}\label{thm:2}
  If $q\sim\mathcal{N}(0,\sigma^2I)$, for arbitrary samples $x,y,$ \begin{equation}\label{eqn:actual}
  \begin{split}
  &||g(x)-g(y)||_2\\
  &\leq 2F\left(\Phi\left(\frac{||x-y||_2}{2\sigma}\right)-\Phi\left(\frac{-||x-y||_2}{2\sigma}\right)\right),
  \end{split}
  \end{equation}
  where $\Phi$ is the cumulative density function of $\mathcal{N}(0,1)$ and $F$ is the maximum $\ell_2$ norm of the base embedding model $h$.
  \end{lemma}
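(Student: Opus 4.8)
The plan is to express the difference $g(x)-g(y)$ as a single integral of the base embedding $h$ against the difference of two shifted Gaussian densities, and then bound its $\ell_2$ norm by the total variation distance between those two Gaussians. Writing $\phi_\sigma$ for the density of $\mathcal{N}(0,\sigma^2 I_d)$ and changing variables $u = x+z$, I would first rewrite
$$g(x)-g(y) = \int_{\mathbb{R}^d} h(u)\left[\phi_\sigma(u-x)-\phi_\sigma(u-y)\right]\,du.$$

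The key step, and the one that handles the high-dimensional, vector-valued nature of the output that distinguishes this setting from scalar randomized smoothing, is to pass the $\ell_2$ norm inside the integral by Minkowski's integral inequality. Since the scalar weight $\phi_\sigma(u-x)-\phi_\sigma(u-y)$ may change sign, this yields
$$\|g(x)-g(y)\|_2 \le \int_{\mathbb{R}^d}\|h(u)\|_2\,\bigl|\phi_\sigma(u-x)-\phi_\sigma(u-y)\bigr|\,du \le F\int_{\mathbb{R}^d}\bigl|\phi_\sigma(u-x)-\phi_\sigma(u-y)\bigr|\,du,$$
where the second inequality uses the defining bound $\|h(u)\|_2 \le F$. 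Equivalently, one may invoke the dual characterization $\|v\|_2 = \sup_{\|w\|_2=1}\langle w,v\rangle$ to reduce to the scalar function $u\mapsto\langle w,h(u)\rangle$, which is bounded in absolute value by $F$; this alternative route also reveals that the bound is tight, attained when $h$ points in a fixed direction with sign matching the density difference.

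The remaining integral is precisely twice the total variation distance between $\mathcal{N}(x,\sigma^2 I_d)$ and $\mathcal{N}(y,\sigma^2 I_d)$. To evaluate it I would exploit rotational and translational invariance: since both covariances equal $\sigma^2 I_d$, the $L^1$ distance depends only on $m:=\|x-y\|_2$, and aligning the mean difference with a coordinate axis reduces the computation to the one-dimensional total variation between $\mathcal{N}(0,\sigma^2)$ and $\mathcal{N}(m,\sigma^2)$. These two densities cross at the midpoint $m/2$, so the $L^1$ distance is obtained by integrating the larger-minus-smaller density on each side of $m/2$, giving $2\bigl(2\Phi(m/2\sigma)-1\bigr)$. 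Using the identity $2\Phi(t)-1 = \Phi(t)-\Phi(-t)$ and substituting back $m=\|x-y\|_2$ then produces the claimed bound.

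I expect the main obstacle to be the vector-to-scalar reduction in the second step: one must be careful that it is $\|h(u)\|_2$, not $h(u)$ itself, that multiplies the absolute value of the signed density difference, and that the worst-case alignment of $h$ is consistent with the single constant $F$ so the inequality is not needlessly loosened. The Gaussian total variation computation in the final step is standard once the reduction to one dimension is in place, the only care needed being the correct identification of the crossing point $m/2$ and the algebraic simplification $2\Phi(t)-1=\Phi(t)-\Phi(-t)$.
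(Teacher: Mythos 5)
Your proof is correct and takes essentially the same route as the paper's: the same change of variables writing $g(x)-g(y)=\int h(u)\bigl[\phi_\sigma(u-x)-\phi_\sigma(u-y)\bigr]\,du$, a bound by $F$ times the $L^1$ distance between the two shifted Gaussian densities, and the same rotation-invariance reduction to one dimension with crossing point at the midpoint to evaluate that distance as $2\bigl(2\Phi(\|x-y\|_2/2\sigma)-1\bigr)$. The only cosmetic difference is that you pass the norm inside the integral via Minkowski's integral inequality in one step, whereas the paper splits the signed density difference into its positive and negative parts and applies the triangle inequality to the two resulting vector integrals — the two arguments produce the identical bound.
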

  Detailed proof is in \autoref{sec:proof3.3}. 

  \noindent\textbf{Tightness of this bound.} Consider a one-dimensional dataset $X$ in $\mathbb{R}$ and an embedding model $h(x) = Fsign(x)$. The smoothed model $g(x) = \mathbb{E}_{z\sim \mathcal{N}(0,\sigma^2) }[h(x+z)] = F(\Phi(\frac{x}{\sigma})-\Phi(-\frac{x}{\sigma}))$. Given two samples $x$ and $-x$, the difference between $g(x)$ and $g(-x)$ is $2F(\Phi(\frac{x}{\sigma})-\Phi(-\frac{x}{\sigma}))$, which reaches the upper bound in \autoref{eqn:actual}.
As $\Phi(z)-\Phi(-z)\leq \sqrt{\frac{2}{\pi}}z$ for $z\geq0$, we have 
\begin{equation*}
\begin{split}
  &2F\left(\Phi\left(\frac{||x-y||_2}{2\sigma}\right)-\Phi\left(\frac{-||x-y||_2}{2\sigma}\right)\right)\\
  &\leq F\sqrt{\frac{2}{\pi\sigma^2}}||x-y||_2.
\end{split}
\end{equation*}
  Thus the smoothed model $g$ has bounded Lipschitz constant, and bounded $\ell_2$ perturbation on the input will result in bounded shift of its embedding.
 
\begin{figure}[t]
    \centering
    \includegraphics[height=5cm]{./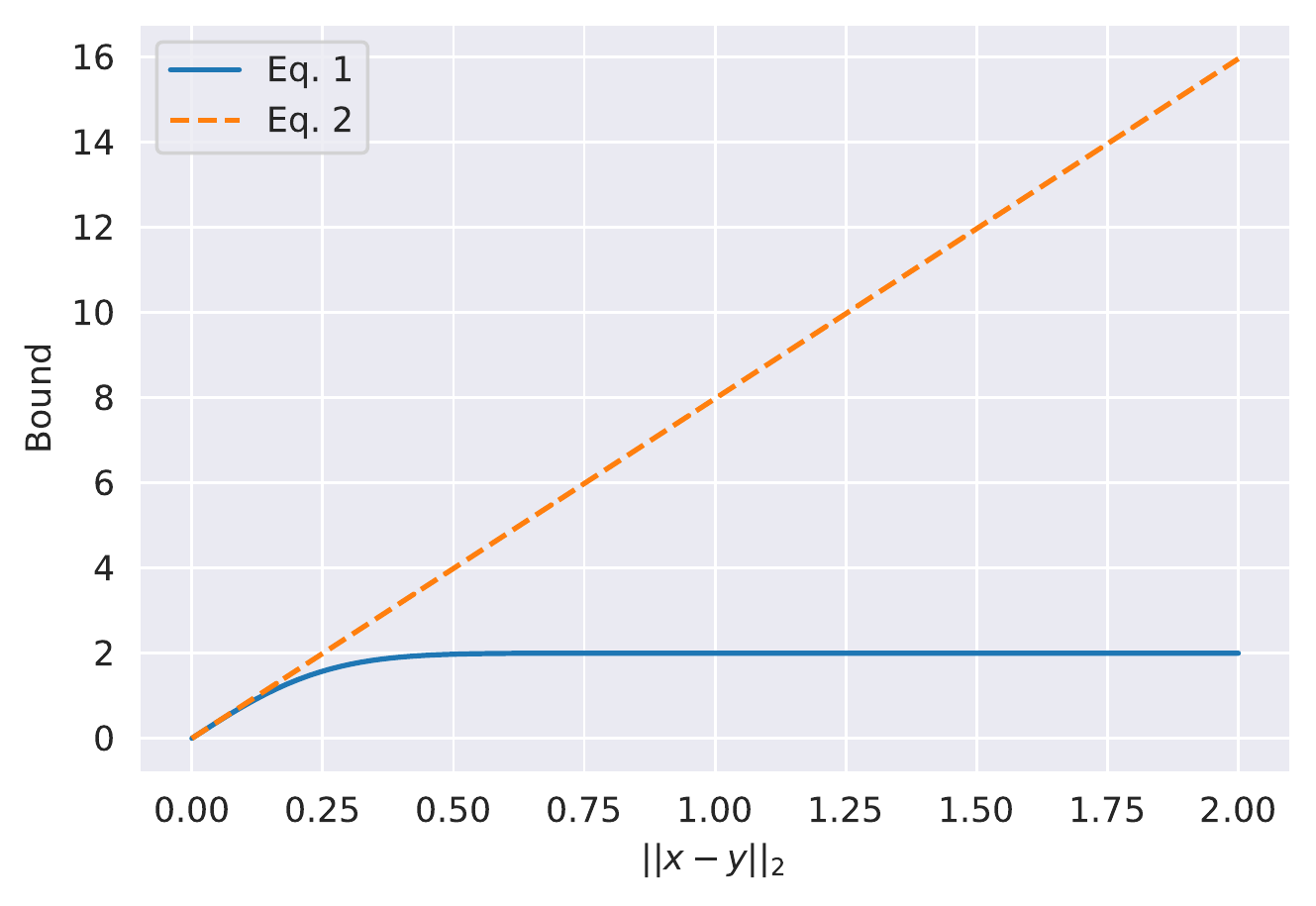}
    \vspace{-0.3cm}
    \caption{Comparison of the upper bounds of $||g(x)-g(y)||_2$ given by \autoref{eqn:actual} and \autoref{eqn:estlip}, where we set $F=1,\sigma = 0.1$.}
    \label{fig:compare}
\end{figure}
\subsection{Calculating certified radius}
\begin{definition} (Certified radius)
Given a sample $x$ and an embedding model $h$, the certified radius $r(x;h)$ is the radius of the largest $\ell_2$ ball, such that all perturbations $\delta$ within the ball cannot change the retrieval score of the sample $x$: 
$$r(x;h):=\max_{r\geq0} r, \ \textup{s.t.}\  R_1(x)=R_1(x+\delta), \forall \delta\hspace{-0.1cm}\in\hspace{-0.1cm}\{||\delta||_2<r\},$$
where $R_1(x)$ is the retrieval score of $x$.
\end{definition}
Following the discussion in  \autoref{sec:lipschitz}, the smoothed embedding model $g$ satisfies \begin{equation}\label{eqn:estlip}
||g(x)-g(y)||_2 \leq F\sqrt{\frac{2}{\pi\sigma^2}}||x-y||_2.\end{equation} Thus if we choose $r(x;g) = \frac{\sigma\sqrt{\pi}}{2\sqrt{2}F}d(x;g),$ for all $\delta$s with $||\delta||_2<r(x;g)$, we have $||g(x)-g(x+\delta)||_2<\frac{d(x;g)}{2}$.

However, \autoref{eqn:estlip} is looser than \autoref{eqn:actual}, and the certified radius computed by \autoref{eqn:estlip} is smaller than the radius given by \autoref{eqn:actual}. As shown in \autoref{fig:compare}, when $F=1$ and $\sigma = 0.1$, the Lipschitz bound of \autoref{eqn:estlip} is much worse than that of \autoref{eqn:actual} when $||x-y||_2$ is moderately large. Thus we will use the tighter bound (\autoref{eqn:actual}) to calculate our certified radius.

\begin{theorem}
\label{prop:1}
For any sample $x$ and the smoothed embedding $g$, with \autoref{eqn:actual}, if $d(x;g)>0$, the certified radius of $x$ is
\begin{equation}\label{eqn:radius}
    r(x;g) =  2\sigma\Phi^{-1}\left(\frac{1}{2}+\frac{d(x;g)}{8F}\right).
\end{equation}
\end{theorem}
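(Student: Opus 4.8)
The plan is to chain the two lemmas already established and then invert the Gaussian CDF. Since $d(x;g)>0$ is exactly the condition that the retrieval score of $x$ is $1$, Lemma~\ref{thm:1} applies, and it tells us that the score is preserved at $x+\delta$ whenever $||g(x)-g(x+\delta)||_2 < d(x;g)/2$. So the whole problem reduces to determining the largest radius $r$ for which every $\delta$ with $||\delta||_2 < r$ keeps the embedding shift $||g(x)-g(x+\delta)||_2$ below $d(x;g)/2$.

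First I would invoke the tight Lipschitz bound of Lemma~\ref{thm:2} (\autoref{eqn:actual}) with $y=x+\delta$, giving
$$||g(x)-g(x+\delta)||_2 \le 2F\Bigl(\Phi\bigl(\tfrac{||\delta||_2}{2\sigma}\bigr)-\Phi\bigl(-\tfrac{||\delta||_2}{2\sigma}\bigr)\Bigr).$$
Because $\Phi$ is strictly increasing, the right-hand side is a strictly increasing function of $||\delta||_2$. Hence the sufficient condition $||g(x)-g(x+\delta)||_2 < d(x;g)/2$ is implied by forcing this bound to stay below $d(x;g)/2$, and the supremum of admissible $||\delta||_2$ is found by solving the corresponding equation with equality.

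The key algebraic simplification is the symmetry $\Phi(-t)=1-\Phi(t)$, which collapses $\Phi(t)-\Phi(-t)$ to $2\Phi(t)-1$. Writing $t=||\delta||_2/(2\sigma)$ and setting $2F(2\Phi(t)-1)=d(x;g)/2$ yields $\Phi(t)=\tfrac12+\tfrac{d(x;g)}{8F}$; inverting the monotone $\Phi$ and unwinding the substitution $t=||\delta||_2/(2\sigma)$ produces exactly $r(x;g)=2\sigma\Phi^{-1}\bigl(\tfrac12+\tfrac{d(x;g)}{8F}\bigr)$, matching \autoref{eqn:radius}.

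The one point that needs genuine care — and the main obstacle — is verifying that $\Phi^{-1}$ is applied to a legitimate argument in $(0,1)$, so that the radius is finite and well defined. The lower side is immediate: $d(x;g)>0$ forces the argument strictly above $1/2$. For the upper side I would use that $F$ is the maximum $\ell_2$ norm of the base (and hence smoothed) embedding, so every pairwise distance satisfies $||g(x)-g(x')||_2 \le 2F$; in particular $d(x;g)\le \min_{x_2\in R/R_x}||g(x)-g(x_2)||_2 \le 2F$, which bounds the argument by $1/2+1/4=3/4<1$. With the argument confined to $(1/2,1)$, the inversion is valid and everything else is a monotone substitution requiring no further estimates.
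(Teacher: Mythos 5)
Your proof is correct and follows essentially the same route as the paper's: chain Lemma~\ref{thm:1} with the tight bound of Lemma~\ref{thm:2}, collapse $\Phi(t)-\Phi(-t)$ to $2\Phi(t)-1$, and solve for $\|\delta\|_2$ by inverting $\Phi$. Your additional check that the argument of $\Phi^{-1}$ lies in $(1/2,3/4]\subset(0,1)$ (via $d(x;g)\le 2F$) is a welcome piece of care that the paper's proof omits, but it does not change the argument's structure.
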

 \begin{proof}
  From Lemma \ref{thm:1}, we know that the 1-NN retrieval score of a sample $x$ with smoothed embedding model $g$ does not change when $$||g(x)-g(x+\delta)||_2<\frac{d(x;g)}{2},$$
  From Lemma \ref{thm:2}, we have 
  $$||g(x)-g(x+\delta)||_2 \leq 2F(\Phi(\frac{||\delta||_2}{2\sigma})-\Phi(\frac{-||\delta||_2}{2\sigma})),$$
  thus if $\delta$ satisfies $$2F(\Phi(\frac{||\delta||_2}{2\sigma})-\Phi(\frac{-||\delta||_2}{2\sigma}))<\frac{d(x;g)}{2},$$ the 1-NN retrieval score of $x$ will not change. As $$2F(\Phi(\frac{||\delta||_2}{2\sigma})-\Phi(\frac{-||\delta||_2}{2\sigma})) = 2F(2\Phi(\frac{||\delta||_2}{2\sigma})-1),$$ by solving 
  $$2F(2\Phi(\frac{||\delta||_2}{2\sigma})-1)<\frac{d(x;g)}{2},$$
  we have $||\delta||_2<2\sigma\Phi^{-1}(\frac{1}{2}+\frac{d(x;g)}{8F})$, thus the certified radius $r(x,g) = 2\sigma\Phi^{-1}(\frac{1}{2}+\frac{d(x;g)}{8F})$
  \end{proof}
In practice, it is hard to compute the smoothed model $g = \mathbb{E}_{z\sim \mathcal{N}(0,\sigma^2I_d) }[h(x+z)]$ and the minimum margin $d(x;g)$ in a closed form. To resolve the issue, we use Monte-Carlo sampling to estimate $g(x)$ and calculate a probabilistic lower bound of $d(x;g)$. Denote the Monte-Carlo estimation of $g(x)$ by $$\hat{g}(x):=\frac{1}{n}\sum_{i=1}^{n}h(x+z_i),$$ where $\{z_1,...,z_n\}$ are sampled i.i.d. from $\mathcal{N}(0,\sigma^2I_d)$. By matrix Chernoff bound \cite{ahlswede2002strong,tropp2012user}, we have the following theorem.
\begin{lemma} \label{thm:3}
With $g(x)\in \mathbb{R}^k$ and $\hat{g}(x):=\frac{1}{n}\sum_{i=1}^{n}h(x+z_i)$, where $\{z_1,...,z_n\}$ are the Monte-Carlo samples of $\mathcal{N}(0,\sigma^2I_d)$, we have
$$\mathbb{P}(||g(x)-\hat{g}(x)||_2>\epsilon)\leq(k+1)\exp\left(-\frac{3\epsilon^2n}{8F^2}\right).$$
\end{lemma}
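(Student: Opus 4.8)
The plan is to reduce the vector-valued deviation $\|g(x)-\hat{g}(x)\|_2$ to the largest eigenvalue of a sum of independent, mean-zero, self-adjoint random matrices, and then invoke the matrix Chernoff/Bernstein inequality of \cite{ahlswede2002strong,tropp2012user}. The dimensional prefactor $(k+1)$ in the statement is exactly the size of the dilated matrices, which is the tell-tale sign that a Hermitian dilation is the right device here.

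First I would set $V_i := h(x+z_i)-g(x)$. Since $g(x)=\mathbb{E}_{z}[h(x+z)]$, the $V_i$ are i.i.d.\ and mean-zero, and $g(x)-\hat{g}(x)=-\frac{1}{n}\sum_{i=1}^{n}V_i$, so the event in question is $\{\|\frac{1}{n}\sum_i V_i\|_2>\epsilon\}$. I would then embed each vector into the space of $(k+1)\times(k+1)$ Hermitian matrices via the dilation
$$\mathcal{S}(v)=\begin{pmatrix} 0 & v^\top \\ v & 0\end{pmatrix},$$
which is linear and satisfies $\lambda_{\max}(\mathcal{S}(v))=\|\mathcal{S}(v)\|=\|v\|_2$. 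Linearity gives $\sum_i \mathcal{S}(V_i)=\mathcal{S}(\sum_i V_i)$, hence $\|\sum_i V_i\|_2=\lambda_{\max}(\sum_i \mathcal{S}(V_i))$, and the vector tail bound becomes a tail bound on the top eigenvalue of a matrix sum.

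Next I would verify the two inputs that matrix Bernstein requires. The almost-sure bound follows from $\|h\|_2\le F$ and $\|g(x)\|_2\le \mathbb{E}\|h(x+z)\|_2\le F$, giving $\|\mathcal{S}(V_i)\|=\|V_i\|_2\le 2F$. For the variance proxy I would use $\mathcal{S}(v)^2=\mathrm{diag}(\|v\|_2^2,\,vv^\top)$ together with the mean-zero identity $\mathbb{E}\|V_i\|_2^2=\mathbb{E}\|h(x+z)\|_2^2-\|g(x)\|_2^2\le F^2$; since $\mathbb{E}[V_iV_i^\top]$ is PSD with trace $\mathbb{E}\|V_i\|_2^2$, this yields $\|\mathbb{E}[\mathcal{S}(V_i)^2]\|=\mathbb{E}\|V_i\|_2^2\le F^2$, so the total variance of $\sum_i \mathcal{S}(V_i)$ is at most $nF^2$. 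Plugging $t=n\epsilon$, variance $nF^2$, and norm bound $2F$ into matrix Bernstein produces a bound of the form $(k+1)\exp\big(\frac{-n\epsilon^2/2}{F^2+2F\epsilon/3}\big)$.

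The final step, and the only place needing care, is matching the clean exponent $\frac{3\epsilon^2 n}{8F^2}$: this holds once $F^2+2F\epsilon/3\le \frac{4}{3}F^2$, i.e.\ in the regime $\epsilon\le F/2$, which is exactly the regime of interest since the certified radius only ever consumes deviations $\epsilon$ that are a small fraction of $F$. I expect the main obstacle to be bookkeeping the constants correctly, in two specific spots: showing the variance proxy is $F^2$ rather than the crude $4F^2$ (this uses the mean-zero cancellation $\mathbb{E}\|V_i\|_2^2\le F^2$, not merely $\|V_i\|_2\le 2F$), and confirming that the self-adjoint dilation yields the prefactor $(k+1)$ rather than $2(k+1)$, since the symmetric embedding is precisely what removes the spurious factor of two that a union bound over the $\pm$ directions would otherwise introduce.
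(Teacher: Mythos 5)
Your argument is sound in its core mechanics, and it is essentially a from-scratch derivation of the tool that the paper simply cites as a black box. The paper's own proof is much blunter: it quotes a rectangular matrix Chernoff/Bernstein bound (Lemma~\ref{lm:1}, attributed to Ahlswede--Winter and Tropp) stating that for independent $M_i\in\mathbb{C}^{d_1\times d_2}$ with $\mathbb{E}[M_i]=\mu$ and $\lVert M_i\rVert\le\gamma$ almost surely, $\mathbb{P}\bigl(\lVert\frac{1}{t}\sum_i M_i-\mu\rVert>\varepsilon\bigr)\le(d_1+d_2)\exp\bigl(-\tfrac{3\varepsilon^2 t}{8\gamma^2}\bigr)$, and applies it directly to the \emph{uncentered} vectors $G_i=h(x+z_i)$ viewed as $k\times 1$ matrices, checking only that $\lVert G_i\rVert=\lVert G_i\rVert_2\le F$; the prefactor $(k+1)$ is just $d_1+d_2$. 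Your route --- center, dilate to $(k+1)\times(k+1)$ self-adjoint matrices, apply Hermitian matrix Bernstein --- is precisely how such rectangular bounds are proved in the literature, so the two proofs coincide at the level of ideas. Yours is more self-contained and more honest about where the constants come from: the mean-zero cancellation giving variance proxy $F^2$ rather than $4F^2$, the uniform bound $2F$ on the centered vectors, and the symmetric spectrum of the dilation making a one-sided $\lambda_{\max}$ bound suffice for the norm (hence $(k+1)$, not $2(k+1)$). All of these verifications are correct.

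The one substantive shortfall is the regime restriction you flag yourself. Your derivation yields the stated exponent $\tfrac{3\epsilon^2 n}{8F^2}$ only when $F^2+2F\epsilon/3\le\tfrac{4}{3}F^2$, i.e.\ $\epsilon\le F/2$; for $\epsilon\ge 2F$ the probability is identically zero (since $\lVert g(x)\rVert_2,\lVert\hat g(x)\rVert_2\le F$), but in the window $\epsilon\in(F/2,2F)$ you only obtain the weaker Bernstein exponent, whereas the lemma claims the $3/8$ exponent for every $\epsilon>0$. So, read literally, your proof does not establish the full statement. The paper escapes this only because its quoted Lemma~\ref{lm:1} asserts the simplified exponent unconditionally in $\varepsilon$ --- and that quoted form is exactly the simplified Ahlswede--Winter-style bound whose standard derivation (the one you gave) carries the same small-deviation caveat; your restriction is therefore not an artifact of a clumsy argument but a caveat the paper's citation glosses over. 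For the downstream use the issue is immaterial: Lemma~\ref{prop:2} instantiates $\epsilon=\sqrt{8F^2\ln(4(k+1)/\alpha)/3n}$, which lies below $F/2$ whenever $n\ge\tfrac{32}{3}\ln(4(k+1)/\alpha)$ --- a condition comfortably met by the paper's settings ($n=10^5$, $k=128$, $\alpha=0.01$).
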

  \begin{proof}
 \noindent We start with an introduction of the matrix Chernoff bound.
  \begin{lemma}\label{lm:1}(Matrix Chernoff bound \cite{ahlswede2002strong,tropp2012user})
  Let $M_1, ..., M_t$ be independent matrix valued random variables such that $ M_{i}\in \mathbb {C} ^{d_{1}\times d_{2}}$ and $ \mathbb {E} [M_{i}]=\mu$. Denote the operator norm of the matrix $M$ by $\lVert M\rVert $. If $\lVert M_{i}\rVert \leq \gamma$ holds almost surely for all $i\in \{1,\ldots ,t\}$, then for every $\epsilon > 0$
  \begin{equation}
       \mathbb{P}\left(\left\|{\frac {1}{t}}\sum _{i=1}^{t}M_{i}-\mu\right\|>\varepsilon \right)\leq (d_{1}+d_{2})\exp \left(-{\frac {3\varepsilon ^{2}t}{8\gamma ^{2}}}\right).
  \end{equation}
  \end{lemma}
\noindent  Now we prove Lemma \ref{thm:3}.

  Assume we have $n$ i.i.d. random variables $G_1,...,G_n$, each $G_i$ have the same distribution with $f(x+Z)$, where $Z$ is an arbitrary smoothing distribution. Notice in our paper $Z\sim\mathcal{N}(0,\sigma^2I)$, but this does not influence the conclusion. We will show for any $Z$, the bound in Lemma \ref{thm:3} with $g(x) = \mathbb{E}[f(x+Z)]$ always hold. 
  
  Since $G_i$ has the same distribution with $f(x+Z)$, we have $G_i\in\mathbb{R}^{k\times 1}$ and $\mathbb{E}[G_i] = g(x)$. The $\ell_2$ operator norm of $G_i$ is given by,
  \begin{equation*}
      \begin{split}
          ||G_i|| &= \sup_{||v||_2\leq 1,v\in\mathbb{R}^{k\times 1}}<G_i,v>\\
          &=||G_i||_2\leq \sup_{x}||f(x)||_2 = F.
      \end{split}
  \end{equation*}
  Thus with Lemma \ref{lm:1} we have
  \begin{equation}
  \mathbb{P}\left(\left\|{\frac {1}{n}}\sum _{i=1}^{n}G_i-g(x)\right\|>\epsilon \right)\leq (k+1)\exp \left(-{\frac {3\epsilon ^{2}n}{8F ^{2}}}\right).
  \end{equation}
  As $\hat{g}(x) = \sum_{i=1}^{n}f(x+Z_i)$, where $Z_i$ are sampled independently from $Z$, $\hat{g}(x)$ follows the distribution of $\frac{1}{n}\sum _{i=1}^{n}G_i$. Therefore we have 
  $$\mathbb{P}(||g(x)-\hat{g}(x)||_2>\epsilon)\leq(k+1)\exp(-\frac{3\epsilon^2n}{8F^2})$$
  \end{proof}

Taking $\alpha = (k+1)\exp(-\frac{3\epsilon^2n}{8F^2})$, we have that with probability at least $1-\alpha$, the $\ell_2$ norm of $g(x)-\hat{g}(x)$ is upper bounded by $\sqrt{8F^2\ln(\frac{k+1}{\alpha})/{3n}}$. Thus the error of the estimation is asymptotically decreasing with rate $O(1/\sqrt{n})$, and we can obtain arbitrarily accurate estimation of $g(x)$ with large Monte-Carlo samplings.
\begin{lemma}
\label{prop:2}
With probability at least $1-\alpha$,
$${d}(x;g)\geq d(x;\hat{g})-4\sqrt{\frac{8F^2\ln\left(\frac{k+1}{\alpha/4}\right)}{3n}}=:\underline{d}(x;g).$$
\end{lemma}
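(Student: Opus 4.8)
The plan is to combine the pointwise concentration bound of Lemma~\ref{thm:3} with a stability analysis of the two nearest-neighbor distances appearing in Definition~\ref{def:d}. First I would convert Lemma~\ref{thm:3} into a high-probability deviation bound for a single point: setting the tail $(k+1)\exp(-3\epsilon^2 n/(8F^2))$ equal to $\alpha/4$ and solving for $\epsilon$ gives $\delta := \sqrt{8F^2\ln((k+1)/(\alpha/4))/(3n)}$, so that for any \emph{fixed} point $w$ we have $\|g(w)-\hat g(w)\|_2 \le \delta$ with probability at least $1-\alpha/4$. The coefficient $4$ in the claimed slack $4\delta$ and the $\alpha/4$ inside the logarithm are exactly what one obtains by controlling the deviation at four such terms, each at confidence $1-\alpha/4$, and taking a union bound so the total failure probability is at most $\alpha$.

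Next I would write $d(x;g)=A-B$ and $d(x;\hat g)=\hat A-\hat B$, where $A=\min_{x_2\in R/R_x}\|g(x)-g(x_2)\|_2$ and $B=\min_{x_1\in R_x}\|g(x)-g(x_1)\|_2$ are the nearest negative and nearest positive distances under $g$, and $\hat A,\hat B$ are the analogues under $\hat g$. The target reduces to $A-B \ge \hat A-\hat B-4\delta$. To lower-bound $A$, let $x_2^\star$ be the (deterministic) true argmin, so $A=\|g(x)-g(x_2^\star)\|_2$; two triangle inequalities give $A \ge \|\hat g(x)-\hat g(x_2^\star)\|_2 - \|g(x)-\hat g(x)\|_2 - \|g(x_2^\star)-\hat g(x_2^\star)\|_2 \ge \hat A - \|g(x)-\hat g(x)\|_2 - \|g(x_2^\star)-\hat g(x_2^\star)\|_2$, using $\|\hat g(x)-\hat g(x_2^\star)\|_2\ge \hat A$. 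Symmetrically, letting $\hat x_1$ denote the argmin defining $\hat B$, I would upper-bound $B \le \|g(x)-g(\hat x_1)\|_2 \le \hat B + \|g(x)-\hat g(x)\|_2 + \|g(\hat x_1)-\hat g(\hat x_1)\|_2$. Subtracting yields $d(x;g) \ge d(x;\hat g) - 2\|g(x)-\hat g(x)\|_2 - \|g(x_2^\star)-\hat g(x_2^\star)\|_2 - \|g(\hat x_1)-\hat g(\hat x_1)\|_2$, so on the event that these deviations are each at most $\delta$ (the query term carrying weight two), the right-hand side is at least $d(x;\hat g)-4\delta$.

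Finally I would control the deviation events with the union bound from the first step, which is where the genuine difficulty lies. The term $\|g(\hat x_1)-\hat g(\hat x_1)\|_2$ is delicate because the nearest-positive index $\hat x_1$ is selected \emph{using} $\hat g$ and is therefore data-dependent, so Lemma~\ref{thm:3} cannot be applied to it as if $\hat x_1$ were a fixed point; this argmin-shift phenomenon is the main obstacle. The intended resolution is to fold it into the four-term accounting: the deviation must be controlled at the query $x$, at the deterministic nearest negative $x_2^\star$, and at the realized nearest positive, and budgeting $\alpha/4$ per term keeps the total failure probability at most $\alpha$ while the accumulated slack is exactly $4\delta$. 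I would take care to make this step rigorous — either by arguing that only a bounded collection of nearest-neighbor candidates is ever relevant, or, if necessary, by noting that a naive pointwise union bound over all gallery embeddings would instead introduce a $\log|R|$ factor — so that the deviation bound holds uniformly over whichever true/estimated nearest neighbor actually attains the minimum.
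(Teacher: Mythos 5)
Your argument is, step for step, the paper's own proof. The paper likewise lower-bounds the negative-side minimum through the \emph{true} argmin (its $x^*$, your $x_2^\star$), upper-bounds the positive-side minimum through the \emph{estimated} argmin (its $x^+$, your $\hat x_1$), applies the same triangle inequalities with the query's deviation counted twice, and closes with the same four-event union bound that converts the per-point tail of Lemma~\ref{thm:3} into the slack $4\sqrt{8F^2\ln((k+1)/(\alpha/4))/(3n)}$.

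The one step you flag as delicate is exactly the step at which the paper itself is not rigorous, so you should not expect to find the missing argument there. The paper's proof promotes Lemma~\ref{thm:3} to the claim that, with probability at least $1-\alpha$, $\|g(x)-\hat g(x)\|_2\le\sqrt{8F^2\ln(\tfrac{k+1}{\alpha})/(3n)}$ holds \emph{for all} $x\in\mathbb{R}^d$ simultaneously, and then applies it at $x^+$; a pointwise concentration bound does not yield such a uniform guarantee, and $x^+$ is genuinely data-dependent. Your diagnosis of the selection bias is right: a gallery positive whose Monte-Carlo estimate happens to drift toward $\hat g(x)$ is more likely to capture the argmin, and that is precisely the event one needs to exclude, so the probability that the \emph{selected} point deviates cannot be bounded by the fixed-point tail. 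Your first suggested repair (``only a bounded collection of candidates is relevant'') does not escape this, since in the worst case every element of $R_x$ is a possible value of $\hat x_1$; the honest fix is your second one, a union bound over $R_x$, which preserves the $4\delta$ structure but replaces $\ln\frac{k+1}{\alpha/4}$ by roughly $\ln\frac{|R_x|(k+1)}{\alpha/4}$ in that term --- a benign change in practice (logarithmic, with a tail exponentially small in $n$), but a change to the constant the lemma asserts. Note also that the negative-side bound needs no such repair, in your write-up or the paper's, precisely because it is anchored at the deterministic true argmin. In short: your proposal reproduces the paper's proof, executed more carefully, and the obstacle you could not fully resolve is a genuine gap in the paper's own argument rather than a defect of your approach.
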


Detailed proof is in \autoref{sec:proof3.8}.

With the lower bound estimation and Theorem \ref{prop:1}, we are able to calculate the certified radius for any given sample $x$.
\begin{proposition}\label{prop:3}(Monte-Carlo calculation of certified radius)
If $\underline{d}(x;g)>0$, with probability at least $1-\alpha$,
\begin{equation}\label{eqn:radiusreal}
    r(x;g) \geq  2\sigma\Phi^{-1}\left(\frac{1}{2}+\frac{\underline{d}(x;g)}{8F}\right).
\end{equation}
\end{proposition}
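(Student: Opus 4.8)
The plan is to combine the exact certified-radius formula of \autoref{prop:1} with the high-probability lower bound on the minimum margin from \autoref{prop:2}, then use monotonicity to convert a bound on $d(x;g)$ into a bound on $r(x;g)$. First I would invoke \autoref{prop:2}, which guarantees that on a single event $E$ of probability at least $1-\alpha$ we have $d(x;g) \geq \underline{d}(x;g)$. It is worth emphasizing that this is the only probabilistic statement used in the whole argument, so no additional union bound is incurred: everything that follows is deduced deterministically on the event $E$, and the $1-\alpha$ confidence level transfers verbatim to the conclusion.

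Next, working on $E$, the hypothesis $\underline{d}(x;g) > 0$ combined with $d(x;g) \geq \underline{d}(x;g)$ yields $d(x;g) > 0$. This is exactly the condition under which \autoref{prop:1} is applicable, so on $E$ the certified radius is given in closed form by $r(x;g) = 2\sigma\Phi^{-1}(\frac{1}{2}+\frac{d(x;g)}{8F})$.

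The final step is a monotonicity argument. Consider the map $\psi(t) = 2\sigma\Phi^{-1}(\frac{1}{2}+\frac{t}{8F})$. Since $\Phi$ is a strictly increasing cumulative distribution function, $\Phi^{-1}$ is strictly increasing, and the inner affine map $t \mapsto \frac{1}{2}+\frac{t}{8F}$ is increasing, so $\psi$ is increasing provided its argument stays in $(0,1)$. I would verify this range condition directly: because $\|h(\cdot)\|_2 \leq F$ pointwise, Jensen's inequality gives $\|g(x)\|_2 = \|\mathbb{E}_z[h(x+z)]\|_2 \leq F$, whence every inter-class distance obeys $\|g(x)-g(x_2)\|_2 \leq 2F$. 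This forces $0 < \underline{d}(x;g) \leq d(x;g) \leq 2F$, and therefore $\frac{1}{2}+\frac{t}{8F} \in (\frac{1}{2}, \frac{3}{4}] \subset (0,1)$ for both $t = \underline{d}(x;g)$ and $t = d(x;g)$. Applying the increasing function $\psi$ to the inequality $d(x;g) \geq \underline{d}(x;g)$ then gives, on $E$, the chain $r(x;g) = \psi(d(x;g)) \geq \psi(\underline{d}(x;g)) = 2\sigma\Phi^{-1}(\frac{1}{2}+\frac{\underline{d}(x;g)}{8F})$, which is precisely the claimed bound holding with probability at least $1-\alpha$.

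Because the statement is essentially a clean composition of the two earlier results, I do not anticipate a serious obstacle. The only points that genuinely require care are (i) recognizing that the probabilistic content is entirely supplied by \autoref{prop:2}, so the confidence level carries over without any further union bound, and (ii) checking that the argument of $\Phi^{-1}$ remains inside $(0,1)$ via the bound $d(x;g)\leq 2F$, which is what legitimizes both the monotonicity step and the very formula of \autoref{prop:1} being invoked.
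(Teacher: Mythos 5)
Your proof is correct and takes essentially the same route as the paper's: combine the exact radius formula of Theorem \ref{prop:1} with the high-probability bound $d(x;g)\geq \underline{d}(x;g)$ from Lemma \ref{prop:2}, then apply monotonicity of $\Phi^{-1}$, with the single $1-\alpha$ event carrying through unchanged. The additional checks you perform --- that $\underline{d}(x;g)>0$ forces $d(x;g)>0$ so Theorem \ref{prop:1} is applicable, and that $d(x;g)\leq 2F$ keeps the argument of $\Phi^{-1}$ inside $(0,1)$ --- are details the paper leaves implicit but do not alter the approach.
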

  \begin{proof}
This proposition is a naive combination of Theorem \ref{prop:1} and Lemma \ref{prop:2}, as $\Phi^{-1}(x)$ is a monotonously increasing function with $x$ and ${d}(x,g)\geq\underline{d}(x,g)$ with probability $1-\alpha$, obviously 
\begin{equation*}
\begin{aligned}
r(x;g) &=  2\sigma\Phi^{-1}\left(\frac{1}{2}+\frac{d(x;g)}{8F}\right)\\
&\geq 2\sigma\Phi^{-1}\left(\frac{1}{2}+\frac{\underline{d}(x;g)}{8F}\right)
\end{aligned}
\end{equation*}
with probability $1-\alpha$.
\end{proof}

\begin{remark}
\label{remark:dimension}
Our Monte-Carlo calculation of certified radius logarithmically depends on the dimension of the embedding space $k$. So our method works even when the dimension of the embedding space is high. This is different from randomized smoothing as its output is required to be a one-dimensional scalar.
\end{remark}
\begin{algorithm}[t]
	\SetAlgoLined
	\KwIn{training set $X$; number of random samples $n$; base embedding model $h$; standard derivation of Gaussian $\sigma$; confidence level $\alpha$.}
	
	Initialize class balanced sampler $S$;\\
	\For{$x\in X$}{
	sample $N$ random variable $z_1,...,z_N$ from $\mathcal{N}(0,\sigma^2I)$;\\
	calculate $\hat{g}(x) = \frac{1}{n}\sum_{i=1}^{n}h(x+z_i)$;\\
	}
	\For{$x\in X$}{
	    calculate $\underline{d}(x;g)$ by Lemma \ref{prop:2};\\
	    \eIf{$\underline{d}(x;g)<0$}{
	    $r(x)=-1$; (reject this sample because its retrieval score is 0)}
	    {
	     calculate certified radius $r(x) = 2\sigma\Phi^{-1}(1/2+\underline{d}(x;g)/8F)$;}
	}
	\textbf{return} all certified radius $r$.
	\caption{Certified Radius by RetrievalGuard}
	\label{alg:1}
\end{algorithm}
Algorithm \ref{alg:1} describes our procedure of calculating the certified radius. We want to emphasize that the base embedding model $h$ does not have any robustness guarantee; only its smoothed version $g$ is certifiably robust. 

\medskip
\noindent\textbf{New techniques compared to randomized smoothing.}
a) Randomized smoothing is designed for the classification tasks, where one can prove that the prediction score w.r.t. the smoothed classifier is larger than 0.5 if the true label is 1. However, in the 1-NN retrieval tasks, we need to identify the conditions under which the 1-NN search is robust against perturbation attacks (Lemma \ref{thm:1}). b) We prove the Lipschitzness of the smoothed embedding model in Lemma \ref{thm:2}. Compared to randomized smoothing where the output is a one-dimensional scalar, in the image retrieval tasks we need to take into account the high-dimensional nature of the embedding space (see Remark \ref{remark:dimension}). c) Compared to the probabilistic guarantee in randomized smoothing, which is a result of Neyman-Pearson lemma, the probabilistic guarantee (Lemma \ref{prop:2}) for our model is based on brand new analysis of minimum margin in Definition \ref{def:d}. The minimum margin $d(x,g)$ depends on multiple samples, and we need to provide a union bound to characterize the uncertainty of all relevant samples.



\begin{figure}[t]
		\begin{minipage}[t]{1\linewidth}
			\centering
			\includegraphics[height=5cm]{./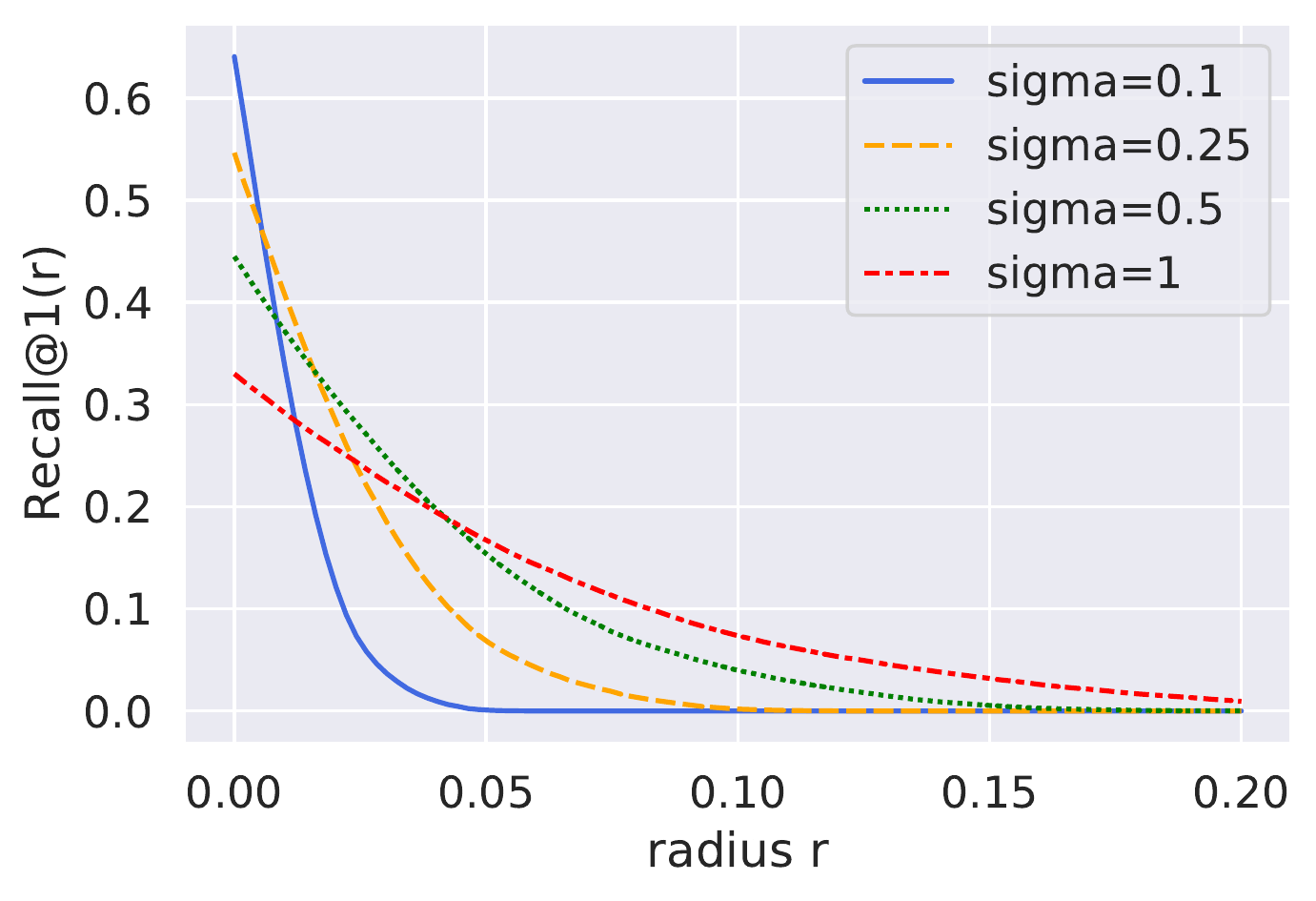}	
			\vspace{-6pt}
			\label{fig:margin_car}
		\end{minipage}
\\
		\begin{minipage}[t]{1\linewidth}
			\centering
			\includegraphics[height=5cm]{./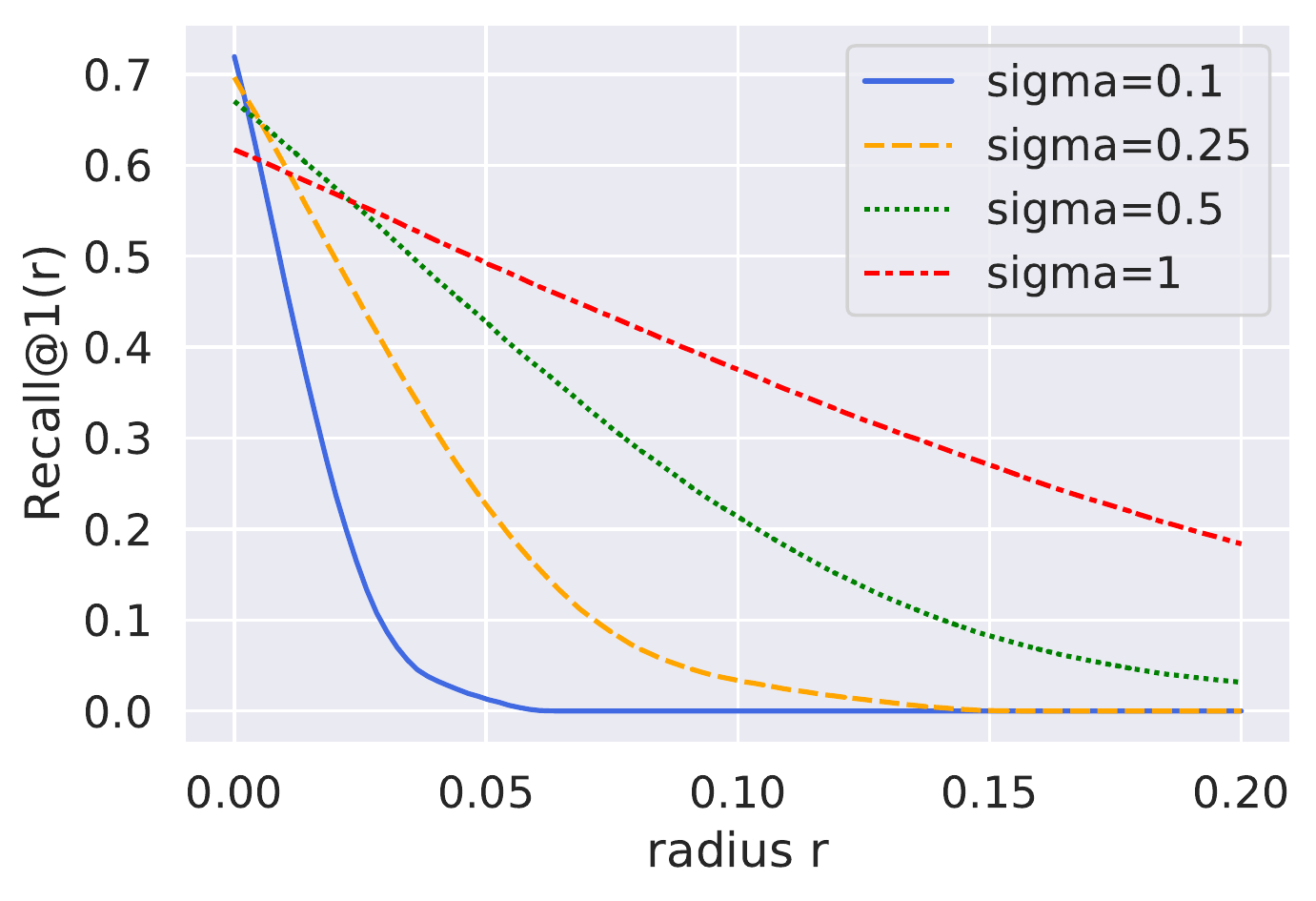}	
			\vspace{-6pt}
			\label{fig:margin_car}
		\end{minipage}
        \caption{Experiments with DML+RetrievalGuard on image retrieval benchmarks with different $\sigma$. \textbf{Top}: DML+RetrievalGuard on Online-Products. \textbf{Bottem}: GDML+RetrievalGuard on Online-Products.}
        \label{fig:DML}
\end{figure}

\section{Experiments}
In the experiments, we use the metric learning model with margin loss \cite{wu2017sampling} as our base embedding model. We apply our RetrievalGuard approach on vanilla metric learning (DML) and the DML augmented by Gaussian noise (GDML) to build the smoothed DML and compare them on three benchmarks. We emphasize that all results reported in this section are from the smoothed models $g$ instead of the base models $h$, as we can only provide robustness guarantee for $g$.
\subsection{Deep metric learning with margin loss}
Margin loss is a tuple-based metric loss, which requires (anchor, positive, negative) triplets as input. The anchor and the positive point are expected to be in the same class while the anchor and the negative point should be in the different classes. Denote the (anchor, positive, negative) by $(x,x^+,x^-)$ and the distribution of the triplets by $p_{tri}$. The margin loss \cite{wu2017sampling} is defined as
\begin{align*}
    L(h,\beta) = &\mathbb{E}_{(x,x^+,x^-)\sim p_{tri}}[\\
    &(||h(x)-h(x^+)||_2-\beta+\gamma)_+\\
    +&(\beta-||h(x)-h(x^-)||_2+\gamma)_+],
\end{align*}
where $\beta$ is a learnable parameter with initial value 0.6 or 1.2 and learning rate $0.0005$. $\gamma=0.2$ is a fixed triplet margin. In the margin loss, $p_{tri}$ is given by a distance sampling method, such that the probability of sampling a negative point with large distance to $x$ in the embedding space is much larger than that of sampling a negative point with small distance to $x$.

\subsection{Gaussian augmented model}\label{sec:gdml}
The norm of Gaussian noise sampled from $\mathcal{N}(0,\sigma^2I_d)$ is of magnitude $\Theta(\sigma\sqrt{d})$ with high probability \cite{zhang2020black}. With moderately large $\sigma$, the distribution of natural images has nearly disjoint support from the distribution of Gaussian-perturbed images. It is therefore hard for the base model $h$ to generate effective embedding, if it can only get access to natural images. As a result, the smoothed model $g$, which is estimated by averaging the base embedding, may suffer from poor performance. A solution to resolve this issue is by training the base embedding model $h$ with Gaussian augmented images \cite{cohen2019certified}. \autoref{fig:DML} shows that using Gaussian augmented model as the base model outperforms using the vanilla model as the base model in all settings. The objective function for the Gaussian augmented model is
\begin{align*}
    L(h,\beta)\hspace{-0.05cm}=&\mathbb{E}_{(x,x^+,x^-)\sim p_{tri}}[\\
    &(||h(x+Z_1)-h(x^++Z_2)||_2-\beta+\gamma)_+\\
    +&(\beta-||h(x+Z_1)-h(x^-+Z_3)||_2+\gamma)_+],
\end{align*}
where $Z_1,Z_2,$ and $Z_3$ are Gaussian random variables and $(t)_+=\max\{t,0\}$.

\begin{figure*}[ht]
		\begin{minipage}[t]{.32\linewidth}
			\centering
			\includegraphics[height=4cm]{./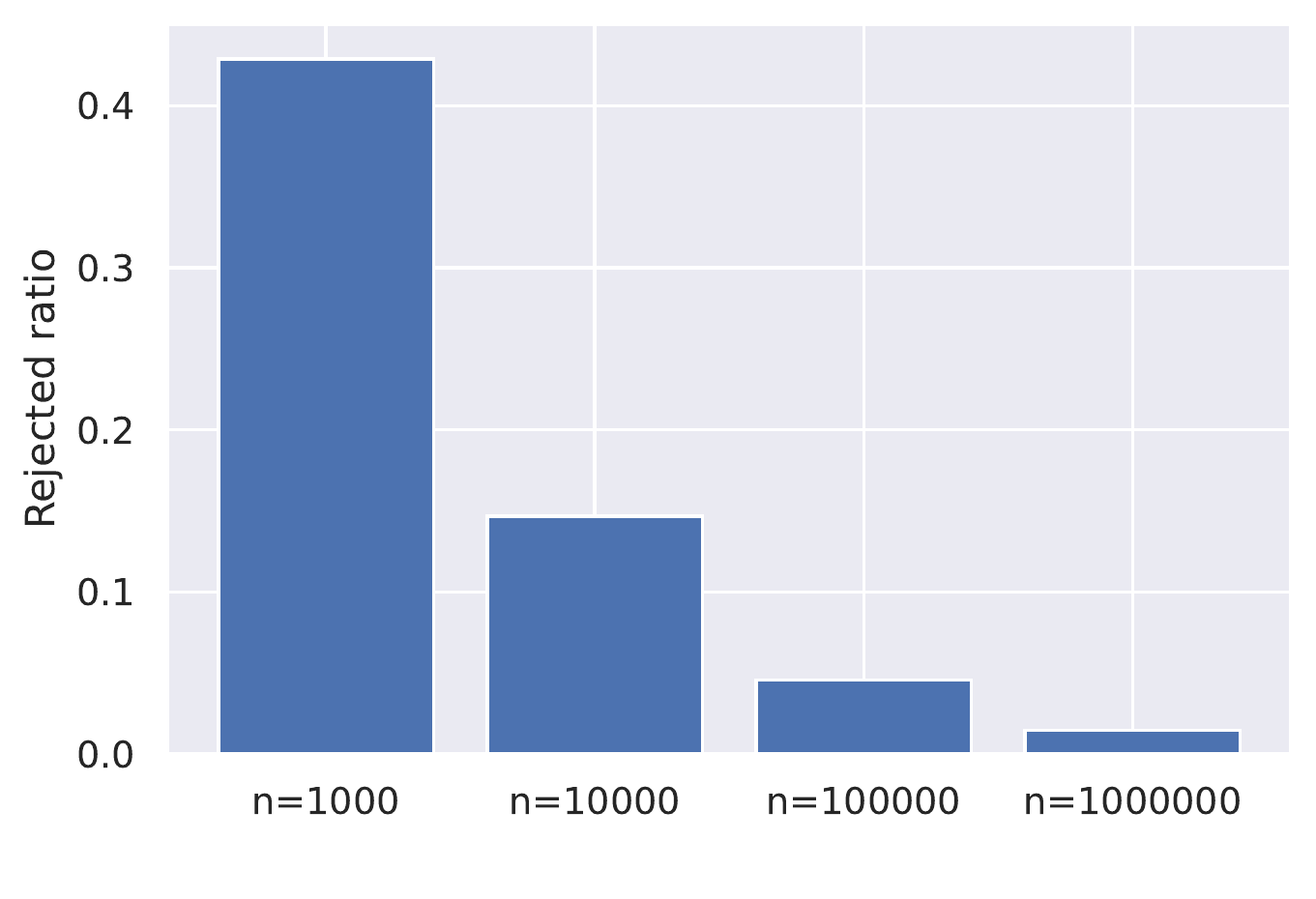}	
			\vspace{-6pt}
			\label{fig:margin_car}
		\end{minipage}
		\begin{minipage}[t]{.32\linewidth}
			\centering
			\includegraphics[height=4cm]{./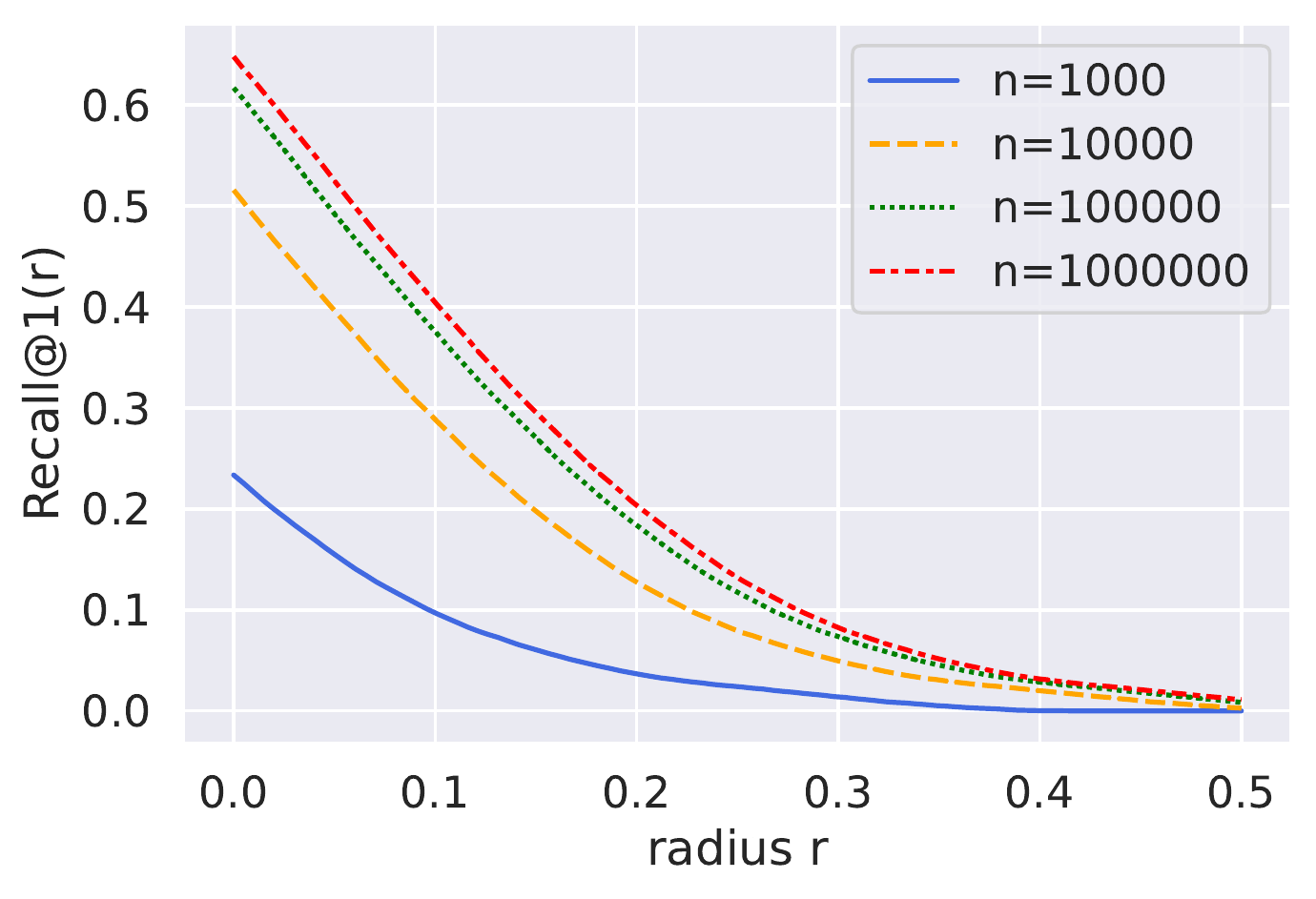}	
			\vspace{-6pt}
			\label{fig:ms_cub}
		\end{minipage}
		\begin{minipage}[t]{.32\linewidth}
			\centering
			\includegraphics[height=4cm]{./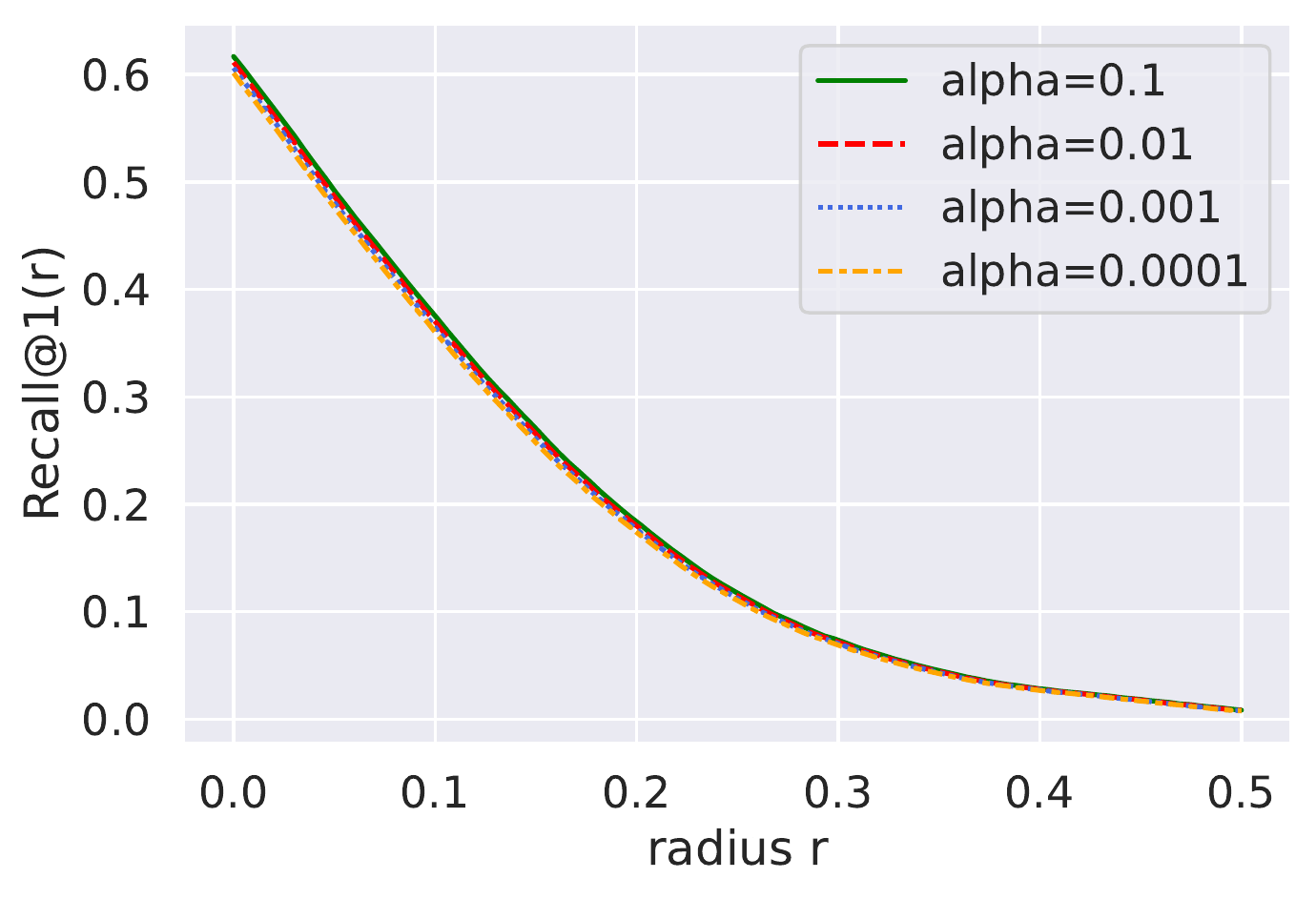}	
			\vspace{-6pt}
			\label{fig:margin_car}
		\end{minipage}
	\vspace{-10pt}

\caption{Experiments with GDML+RetrievalGuard on Online-Products with $\sigma=1$. \textbf{Left}: Comparison of rejected ratio with the number of Monte-Carlo samples $n$. The rejected ratio is the ratio of samples with $d(x,\hat{g})>0$ but $\underline{d}(x,g)\leq0$. \textbf{Middle}: Comparison of $Recall@1(r)$ if the number
of Monte-Carlo samples $n$ is larger or smaller. \textbf{Right}: Comparison of $Recall@1(r)$ under varying values of $\alpha$.}
\vspace{-0.5cm}
\label{fig:ablation}

\end{figure*}
\begin{figure}[ht]
    \centering
    \includegraphics[height=5cm]{./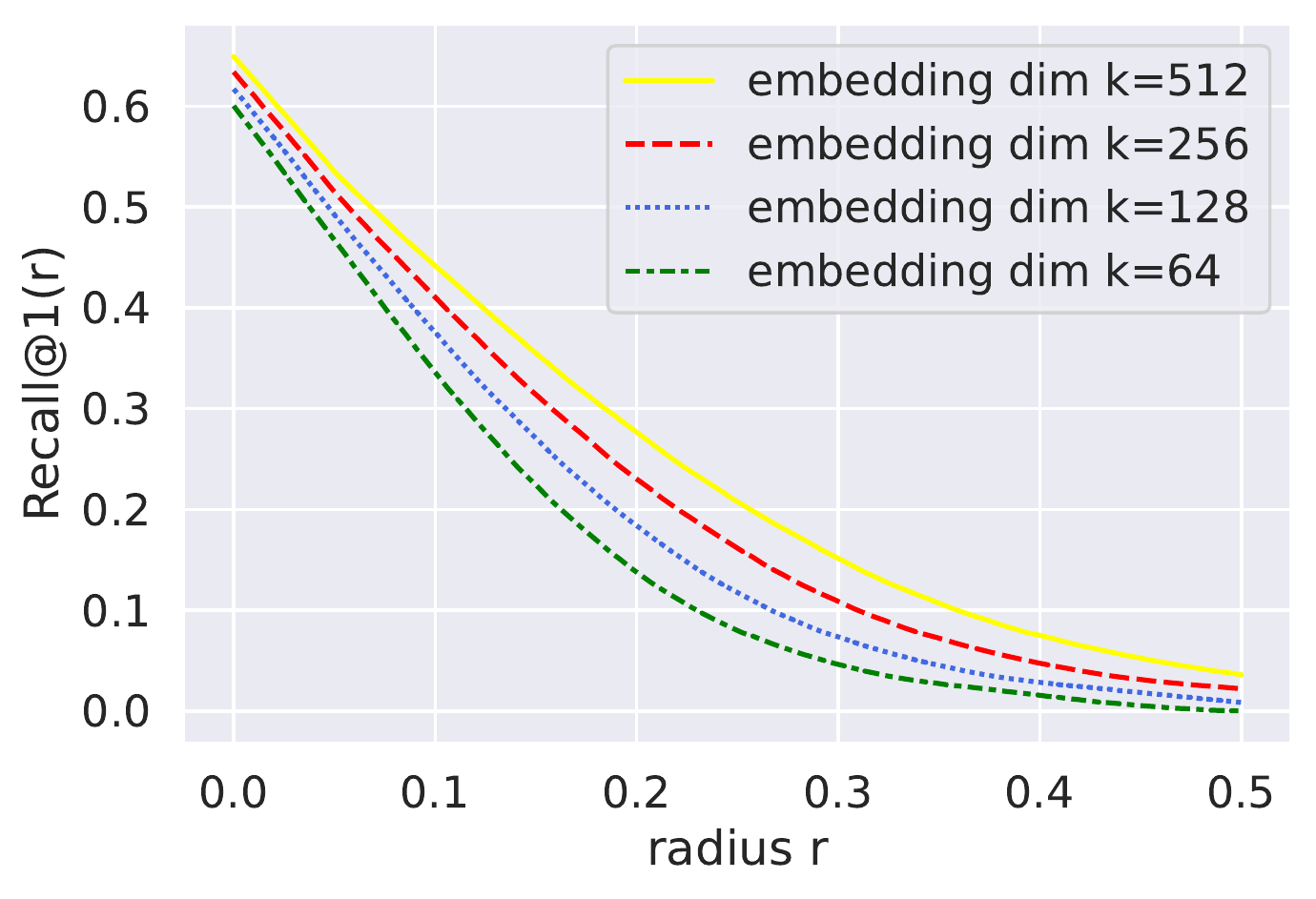}

    \caption{GDML+RetrievalGuard ($\sigma=1$) with different dimensions of embedding space on the Online-Products dataset.}
    \label{fig:ablationk}
	\vspace{-10pt}
\end{figure}
\subsection{Experimental settings}
\noindent\textbf{Datasets.} We run experiments with a popular dataset Online-Products of metric learning \cite{song2016deep}, which contains 120,053 product images. We use the first 11,318 classes of products as the training set and another 1,000 classes as the test set. The experiments with CUB200 \cite{WahCUB_200_2011} and CARS196 \cite{KrauseStarkDengFei-Fei_3DRR2013} are listed in \autoref{sec:additional experiments}

\noindent\textbf{Training hyper-parameters.}
We adapt the DML framework from \cite{roth2020revisiting} for our training. In all experiments, we use ResNet50 architecture \cite{he2016deep} pretrained on the ImageNet dataset \cite{krizhevsky2012imagenet} with frozen Batch-Normalization layers as our backbone. We first re-scale the images to $[0,1]^d$, then randomly resize and crop the images to $224 \times 224$
for training, and apply center crop to the same size for evaluation. The embedding dimension $k$ is 128 and the number of training epochs is 100. The learning rate is 1e-5 with multi-step learning rate scheduler 0.3 at the 30-th, 55-th, and 75-th epochs. We select the initial value of $\beta$ as 1.2, learning rate of $\beta$ as 0.0005, and $\gamma = 0.2$ in the margin loss. We also test the model performance under Gaussian noise with $\sigma = 0.1, 0.25, 0.5, 1$. As the embedding of metric learning models is $\ell_2$ normalized, we have $F=1$. For each sample, we generate 100,000 Monte-Carlo samples to estimate $g(x)$. The confidence level $\alpha$ is chosen as 0.01. The running time of RetrievalGuard for a single image evaluation with 100,000 Monte-Carlo samples on a 24GB Nvidia Tesla P40 GPU is about 3 minutes.
\noindent\textbf{Evaluation metrics.} We focus on the 1-NN retrieval task. A natural metric is the Recall@1 score, which is given by the average of 1-NN retrieval score of all samples, i.e.,
$Recall@1 = \frac{1}{N}\sum_{i=1}^{N}R_1(x_i).$
To evaluate the certified robustness of the 1-NN retrieval, we define
$Recall@1(r) = \frac{1}{N}\sum_{i=1}^{N}R_1(x_i)\mathbb{I}_{r(x_i,g)>r},$ which represents the averaged retrieval score of the samples such that the certified radius is larger than $r$. Note that $Recall@1 = Recall@1(0)$.


\subsection{Experimental results}

\subsubsection{Performance of the smoothed DML models.}

\autoref{fig:DML} shows the plot of $Recall@1(r)$ of DML+RG and GDML+RG models under varying values of $\sigma$.
We see that there is a robustness/accuracy trade-off~\cite{yang2020closer,zhang2019theoretically} controlled by $\sigma$. When $\sigma$ is low, small radii can be certified with high retrieval score, while large radii cannot be certified. When $\sigma$ is high, large radii can be certified, while small radii are certified with a low retrieval score. Besides, the GDML+RG models outperform the DML+RG models in all experiments, which is consistent with our discussions in \autoref{sec:gdml}.
\vspace{-0.1cm}
\subsubsection{Ablation study}
\vspace{-0.1cm}
We study the effect of number of Monte-Carlo samples $n$, the failure probability $\alpha$, and the embedding size $k$ on the model robustness. All experiments are run on the GDML+RetrievalGuard model with $\sigma = 1$ and the Online-Products dataset.

\autoref{fig:ablation} (\textbf{left}) plots the rejected ratio under different numbers of Monte-Carlo samples $n$. We use a fixed $\alpha = 0.01$ in our experiment. The rejected ratio is the ratio of samples with retrieval score 1 on the estimated model $\hat{g}$, i.e., $d(x;\hat{g})>0$, but $x$ is rejected as $\underline{d}(x;g)\leq0$. The ratio of rejected samples is decreasing w.r.t. $n$. When $n=10,000$, there are 14\% samples being rejected. In our experiments, we choose $n = 100,000$ with roughly 4\% rejected ratio.

\autoref{fig:ablation} (\textbf{middle}) illustrates the $Recall@1(r)$ under different numbers of Monte-Carlo samples $n$. We use a fixed $\alpha = 0.01$ in the experiment. When $n$ is increased from $1,000$ to $100,000$, the improvement of $Recall@1(r)$ is significant. When $n$ is increased from $100,000$ to $1,000,000$, the improvement of $Recall@1(r)$ is relatively small. Therefore, we choose $n = 100,000$ in our experiments.

\autoref{fig:ablation} (\textbf{right}) draws the $Recall@1(r)$ under different confidence levels $\alpha$. We fix $n = 100,000$ in the experiment. It shows that $Recall@1(r)$ is stable under varying values of $\alpha$, which indicates that our robustness guarantee is not sensitive to $\alpha$.

\autoref{fig:ablationk} shows the $Recall@1(r)$ under different dimensions $k$ of embedding space. We use a fixed $n = 100,000$ and $\alpha = 0.01$ in the experiment. It shows that the models are more robust with larger $k$. This is because high-dimensional embedding space can improve the expressive power of the DML models, and dissimilar samples can be separated with a large margin, i.e., $\underline{d}(x,g)$ is large.

\section{Conclusion}

In this work, we propose RetrievalGuard, the first provably robust 1-NN image retrieval model, by smoothing the vanilla embedding model with a Gaussian distribution. 
We prove that, with arbitrary perturbation $\delta$, whose $\ell_2$ norm is bounded by the certified radius, the 1-NN retrieval score of the perturbed samples on the smoothed model does not change. We empirically demonstrate the effectiveness of our model on image retrieval tasks with Online-Products. Future works include designing $\ell_p$ certificate algorithms and extending our algorithm to $k$-NN image retrieval tasks.

\section*{Acknowledgement}
Hongyang Zhang was supported in part by an NSERC Discovery Grant. Yihan Wu and Heng Huang were partially supported by  NSF IIS 1845666, 1852606, 1838627, 1837956, 1956002, IIA 2040588.

\bibliography{example_paper}

\begin{thebibliography}{47}
\providecommand{\natexlab}[1]{#1}
\providecommand{\url}[1]{\texttt{#1}}
\expandafter\ifx\csname urlstyle\endcsname\relax
  \providecommand{\doi}[1]{doi: #1}\else
  \providecommand{\doi}{doi: \begingroup \urlstyle{rm}\Url}\fi

\bibitem[Ahlswede \& Winter(2002)Ahlswede and Winter]{ahlswede2002strong}
Ahlswede, R. and Winter, A.
\newblock Strong converse for identification via quantum channels.
\newblock \emph{IEEE Transactions on Information Theory}, 48\penalty0
  (3):\penalty0 569--579, 2002.

\bibitem[Athalye et~al.(2018)Athalye, Carlini, and
  Wagner]{athalye2018obfuscated}
Athalye, A., Carlini, N., and Wagner, D.
\newblock Obfuscated gradients give a false sense of security: Circumventing
  defenses to adversarial examples.
\newblock In \emph{International conference on machine learning}, pp.\
  274--283, 2018.

\bibitem[Biggio et~al.(2013)Biggio, Corona, Maiorca, Nelson, {\v{S}}rndi{\'c},
  Laskov, Giacinto, and Roli]{biggio2013evasion}
Biggio, B., Corona, I., Maiorca, D., Nelson, B., {\v{S}}rndi{\'c}, N., Laskov,
  P., Giacinto, G., and Roli, F.
\newblock Evasion attacks against machine learning at test time.
\newblock In \emph{Joint European conference on machine learning and knowledge
  discovery in databases}, pp.\  387--402. Springer, 2013.

\bibitem[Blum et~al.(2020)Blum, Dick, Manoj, and Zhang]{blum2020random}
Blum, A., Dick, T., Manoj, N., and Zhang, H.
\newblock Random smoothing might be unable to certify $\ell_\infty$ robustness
  for high-dimensional images.
\newblock \emph{Journal of Machine Learning Research}, 21:\penalty0 1--21,
  2020.

\bibitem[Blum et~al.(2022)Blum, Montasser, Shakhnarovich, and
  Zhang]{blum2022boosting}
Blum, A., Montasser, O., Shakhnarovich, G., and Zhang, H.
\newblock Boosting barely robust learners: A new perspective on adversarial
  robustness.
\newblock \emph{arXiv preprint arXiv:2202.05920}, 2022.

\bibitem[Bojchevski et~al.(2020)Bojchevski, Klicpera, and
  G{\"u}nnemann]{bojchevski2020efficient}
Bojchevski, A., Klicpera, J., and G{\"u}nnemann, S.
\newblock Efficient robustness certificates for discrete data: Sparsity-aware
  randomized smoothing for graphs, images and more.
\newblock In \emph{International Conference on Machine Learning}, pp.\
  1003--1013. PMLR, 2020.

\bibitem[Boudiaf et~al.(2020)Boudiaf, Rony, Ziko, Granger, Pedersoli,
  Piantanida, and Ayed]{boudiaf2020unifying}
Boudiaf, M., Rony, J., Ziko, I.~M., Granger, E., Pedersoli, M., Piantanida, P.,
  and Ayed, I.~B.
\newblock A unifying mutual information view of metric learning: cross-entropy
  vs. pairwise losses.
\newblock In \emph{European Conference on Computer Vision}, pp.\  548--564.
  Springer, 2020.

\bibitem[Bouniot et~al.(2020)Bouniot, Audigier, and
  Loesch]{bouniot2020vulnerability}
Bouniot, Q., Audigier, R., and Loesch, A.
\newblock Vulnerability of person re-identification models to metric
  adversarial attacks.
\newblock In \emph{Proceedings of the IEEE/CVF Conference on Computer Vision
  and Pattern Recognition Workshops}, pp.\  794--795, 2020.

\bibitem[Chaudhuri et~al.(2019)Chaudhuri, Banerjee, and
  Bhattacharya]{chaudhuri2019siamese}
Chaudhuri, U., Banerjee, B., and Bhattacharya, A.
\newblock Siamese graph convolutional network for content based remote sensing
  image retrieval.
\newblock \emph{Computer vision and image understanding}, 184:\penalty0 22--30,
  2019.

\bibitem[Cohen et~al.(2019)Cohen, Rosenfeld, and Kolter]{cohen2019certified}
Cohen, J.~M., Rosenfeld, E., and Kolter, J.~Z.
\newblock Certified adversarial robustness via randomized smoothing.
\newblock \emph{ICML}, 2019.

\bibitem[Feng et~al.(2020)Feng, Chen, Dai, and Xia]{feng2020adversarial}
Feng, Y., Chen, B., Dai, T., and Xia, S.-T.
\newblock Adversarial attack on deep product quantization network for image
  retrieval.
\newblock In \emph{Proceedings of the AAAI Conference on Artificial
  Intelligence}, pp.\  10786--10793, 2020.

\bibitem[He et~al.(2016)He, Zhang, Ren, and Sun]{he2016deep}
He, K., Zhang, X., Ren, S., and Sun, J.
\newblock Deep residual learning for image recognition.
\newblock In \emph{Proceedings of the IEEE conference on computer vision and
  pattern recognition}, pp.\  770--778, 2016.

\bibitem[Kim et~al.(2020)Kim, Kim, Cho, and Kwak]{kim2020proxy}
Kim, S., Kim, D., Cho, M., and Kwak, S.
\newblock Proxy anchor loss for deep metric learning.
\newblock In \emph{Proceedings of the IEEE/CVF Conference on Computer Vision
  and Pattern Recognition}, pp.\  3238--3247, 2020.

\bibitem[Krause et~al.(2013)Krause, Stark, Deng, and
  Fei-Fei]{KrauseStarkDengFei-Fei_3DRR2013}
Krause, J., Stark, M., Deng, J., and Fei-Fei, L.
\newblock 3d object representations for fine-grained categorization.
\newblock In \emph{4th International IEEE Workshop on 3D Representation and
  Recognition (3dRR-13)}, Sydney, Australia, 2013.

\bibitem[Krizhevsky et~al.(2012)Krizhevsky, Sutskever, and
  Hinton]{krizhevsky2012imagenet}
Krizhevsky, A., Sutskever, I., and Hinton, G.~E.
\newblock Imagenet classification with deep convolutional neural networks.
\newblock \emph{Advances in neural information processing systems},
  25:\penalty0 1097--1105, 2012.

\bibitem[Kumar et~al.(2020)Kumar, Levine, Goldstein, and Feizi]{kumar2020curse}
Kumar, A., Levine, A., Goldstein, T., and Feizi, S.
\newblock Curse of dimensionality on randomized smoothing for certifiable
  robustness.
\newblock \emph{ICML}, 2020.

\bibitem[Lecuyer et~al.(2019)Lecuyer, Atlidakis, Geambasu, Hsu, and
  Jana]{Lecuyer2019}
Lecuyer, M., Atlidakis, V., Geambasu, R., Hsu, D., and Jana, S.
\newblock Certified robustness to adversarial examples with differential
  privacy.
\newblock In \emph{2019 {IEEE} Symposium on Security and Privacy ({SP})}.
  {IEEE}, may 2019.
\newblock \doi{10.1109/sp.2019.00044}.

\bibitem[Li et~al.(2019{\natexlab{a}})Li, Chen, Wang, and
  Carin]{li2019certified}
Li, B., Chen, C., Wang, W., and Carin, L.
\newblock Certified adversarial robustness with additive noise.
\newblock In \emph{Advances in Neural Information Processing Systems}, pp.\
  9464--9474, 2019{\natexlab{a}}.

\bibitem[Li et~al.(2019{\natexlab{b}})Li, Ji, Liu, Hong, Gao, and
  Tian]{li2019universal}
Li, J., Ji, R., Liu, H., Hong, X., Gao, Y., and Tian, Q.
\newblock Universal perturbation attack against image retrieval.
\newblock In \emph{Proceedings of the IEEE/CVF International Conference on
  Computer Vision}, pp.\  4899--4908, 2019{\natexlab{b}}.

\bibitem[Liu et~al.(2021)Liu, Jia, and Gong]{liu2021pointguard}
Liu, H., Jia, J., and Gong, N.~Z.
\newblock Pointguard: Provably robust 3d point cloud classification.
\newblock In \emph{Proceedings of the IEEE/CVF Conference on Computer Vision
  and Pattern Recognition}, pp.\  6186--6195, 2021.

\bibitem[Liu et~al.(2016)Liu, Luo, Qiu, Wang, and Tang]{liu2016deepfashion}
Liu, Z., Luo, P., Qiu, S., Wang, X., and Tang, X.
\newblock Deepfashion: Powering robust clothes recognition and retrieval with
  rich annotations.
\newblock In \emph{Proceedings of the IEEE conference on computer vision and
  pattern recognition}, pp.\  1096--1104, 2016.

\bibitem[Musgrave et~al.(2020)Musgrave, Belongie, and Lim]{musgrave2020metric}
Musgrave, K., Belongie, S., and Lim, S.-N.
\newblock A metric learning reality check.
\newblock In \emph{European Conference on Computer Vision}, pp.\  681--699.
  Springer, 2020.

\bibitem[Nair et~al.(2020)Nair, Subramaniam, and
  Prasannavenkatesan]{nair2020review}
Nair, L.~R., Subramaniam, K., and Prasannavenkatesan, G.
\newblock A review on multiple approaches to medical image retrieval system.
\newblock In \emph{Intelligent Computing in Engineering}, pp.\  501--509.
  Springer, 2020.

\bibitem[Panum et~al.(2021)Panum, Wang, Kan, Fernandes, and
  Jha]{panum2021exploring}
Panum, T.~K., Wang, Z., Kan, P., Fernandes, E., and Jha, S.
\newblock Exploring adversarial robustness of deep metric learning.
\newblock \emph{arXiv preprint arXiv:2102.07265}, 2021.

\bibitem[Roth et~al.(2020)Roth, Milbich, Sinha, Gupta, Ommer, and
  Cohen]{roth2020revisiting}
Roth, K., Milbich, T., Sinha, S., Gupta, P., Ommer, B., and Cohen, J.~P.
\newblock Revisiting training strategies and generalization performance in deep
  metric learning.
\newblock In \emph{International Conference on Machine Learning}, pp.\
  8242--8252. PMLR, 2020.

\bibitem[Salman et~al.(2019)Salman, Li, Razenshteyn, Zhang, Zhang, Bubeck, and
  Yang]{salman2019provably}
Salman, H., Li, J., Razenshteyn, I., Zhang, P., Zhang, H., Bubeck, S., and
  Yang, G.
\newblock Provably robust deep learning via adversarially trained smoothed
  classifiers.
\newblock In \emph{Advances in Neural Information Processing Systems}, pp.\
  11292--11303, 2019.

\bibitem[Schroff et~al.(2015)Schroff, Kalenichenko, and
  Philbin]{schroff2015facenet}
Schroff, F., Kalenichenko, D., and Philbin, J.
\newblock Facenet: A unified embedding for face recognition and clustering.
\newblock In \emph{Proceedings of the IEEE conference on computer vision and
  pattern recognition}, pp.\  815--823, 2015.

\bibitem[Song et~al.(2016)Song, Xiang, Jegelka, and Savarese]{song2016deep}
Song, H.~O., Xiang, Y., Jegelka, S., and Savarese, S.
\newblock Deep metric learning via lifted structured feature embedding.
\newblock In \emph{IEEE Conference on Computer Vision and Pattern Recognition
  (CVPR)}, 2016.

\bibitem[Szegedy et~al.(2014)Szegedy, Zaremba, Sutskever, Bruna, Erhan,
  Goodfellow, and Fergus]{szegedy2013intriguing}
Szegedy, C., Zaremba, W., Sutskever, I., Bruna, J., Erhan, D., Goodfellow, I.,
  and Fergus, R.
\newblock Intriguing properties of neural networks.
\newblock \emph{ICML}, 2014.

\bibitem[Tjeng et~al.(2019)Tjeng, Xiao, and Tedrake]{tjeng2018evaluating}
Tjeng, V., Xiao, K.~Y., and Tedrake, R.
\newblock Evaluating robustness of neural networks with mixed integer
  programming.
\newblock In \emph{International Conference on Learning Representations}, 2019.
\newblock URL \url{https://openreview.net/forum?id=HyGIdiRqtm}.

\bibitem[Tropp(2012)]{tropp2012user}
Tropp, J.~A.
\newblock User-friendly tail bounds for sums of random matrices.
\newblock \emph{Foundations of computational mathematics}, 12\penalty0
  (4):\penalty0 389--434, 2012.

\bibitem[Wah et~al.(2011)Wah, Branson, Welinder, Perona, and
  Belongie]{WahCUB_200_2011}
Wah, C., Branson, S., Welinder, P., Perona, P., and Belongie, S.
\newblock {The Caltech-UCSD Birds-200-2011 Dataset}.
\newblock Technical Report CNS-TR-2011-001, California Institute of Technology,
  2011.

\bibitem[Wang et~al.(2020{\natexlab{a}})Wang, Wang, Li, Zhang, and
  Lin]{wang2020transferable}
Wang, H., Wang, G., Li, Y., Zhang, D., and Lin, L.
\newblock Transferable, controllable, and inconspicuous adversarial attacks on
  person re-identification with deep mis-ranking.
\newblock In \emph{Proceedings of the IEEE/CVF Conference on Computer Vision
  and Pattern Recognition}, pp.\  342--351, 2020{\natexlab{a}}.

\bibitem[Wang et~al.(2020{\natexlab{b}})Wang, Chen, Chakraborty, and
  Yu]{wang2020orthogonal}
Wang, J., Chen, Y., Chakraborty, R., and Yu, S.~X.
\newblock Orthogonal convolutional neural networks.
\newblock In \emph{Proceedings of the IEEE/CVF conference on computer vision
  and pattern recognition}, pp.\  11505--11515, 2020{\natexlab{b}}.

\bibitem[Wang et~al.(2019)Wang, Han, Huang, Dong, and Scott]{wang2019multi}
Wang, X., Han, X., Huang, W., Dong, D., and Scott, M.~R.
\newblock Multi-similarity loss with general pair weighting for deep metric
  learning.
\newblock In \emph{Proceedings of the IEEE/CVF Conference on Computer Vision
  and Pattern Recognition}, pp.\  5022--5030, 2019.

\bibitem[Weng et~al.(2018)Weng, Zhang, Chen, Yi, Su, Gao, Hsieh, and
  Daniel]{weng2018evaluating}
Weng, T.-W., Zhang, H., Chen, P.-Y., Yi, J., Su, D., Gao, Y., Hsieh, C.-J., and
  Daniel, L.
\newblock Evaluating the robustness of neural networks: An extreme value theory
  approach.
\newblock \emph{arXiv preprint arXiv:1801.10578}, 2018.

\bibitem[Wu et~al.(2017)Wu, Manmatha, Smola, and Krahenbuhl]{wu2017sampling}
Wu, C.-Y., Manmatha, R., Smola, A.~J., and Krahenbuhl, P.
\newblock Sampling matters in deep embedding learning.
\newblock In \emph{Proceedings of the IEEE International Conference on Computer
  Vision}, pp.\  2840--2848, 2017.

\bibitem[Wu et~al.(2021)Wu, Bojchevski, Kuvshinov, and
  G{\"u}nnemann]{wu2021completing}
Wu, Y., Bojchevski, A., Kuvshinov, A., and G{\"u}nnemann, S.
\newblock Completing the picture: Randomized smoothing suffers from the curse
  of dimensionality for a large family of distributions.
\newblock In \emph{International Conference on Artificial Intelligence and
  Statistics}, pp.\  3763--3771. PMLR, 2021.

\bibitem[Yang et~al.(2020{\natexlab{a}})Yang, Wei, Zhang, and
  Zhu]{yang2020design}
Yang, X., Wei, F., Zhang, H., and Zhu, J.
\newblock Design and interpretation of universal adversarial patches in face
  detection.
\newblock In \emph{European Conference on Computer Vision}, pp.\  174--191.
  Springer, 2020{\natexlab{a}}.

\bibitem[Yang et~al.(2020{\natexlab{b}})Yang, Rashtchian, Zhang, Salakhutdinov,
  and Chaudhuri]{yang2020closer}
Yang, Y.-Y., Rashtchian, C., Zhang, H., Salakhutdinov, R., and Chaudhuri, K.
\newblock A closer look at accuracy vs. robustness.
\newblock In \emph{Adavnces in Neural Information Processing Systems, NeurIPS},
  2020{\natexlab{b}}.

\bibitem[Zhang et~al.(2020{\natexlab{a}})Zhang, Ye, Gong, Zhu, and
  Liu]{zhang2020black}
Zhang, D., Ye, M., Gong, C., Zhu, Z., and Liu, Q.
\newblock Black-box certification with randomized smoothing: A functional
  optimization based framework.
\newblock \emph{arXiv preprint arXiv:2002.09169}, 2020{\natexlab{a}}.

\bibitem[Zhang et~al.(2019{\natexlab{a}})Zhang, Yu, Jiao, Xing, El~Ghaoui, and
  Jordan]{zhang2019theoretically}
Zhang, H., Yu, Y., Jiao, J., Xing, E., El~Ghaoui, L., and Jordan, M.
\newblock Theoretically principled trade-off between robustness and accuracy.
\newblock In \emph{International Conference on Machine Learning}, pp.\
  7472--7482. PMLR, 2019{\natexlab{a}}.

\bibitem[Zhang et~al.(2019{\natexlab{b}})Zhang, Zhang, and
  Hsieh]{zhang2019recurjac}
Zhang, H., Zhang, P., and Hsieh, C.-J.
\newblock Recurjac: An efficient recursive algorithm for bounding jacobian
  matrix of neural networks and its applications.
\newblock In \emph{Proceedings of the AAAI Conference on Artificial
  Intelligence}, volume~33, pp.\  5757--5764, 2019{\natexlab{b}}.

\bibitem[Zhang et~al.(2020{\natexlab{b}})Zhang, Chen, Xiao, Gowal, Stanforth,
  Li, Boning, and Hsieh]{zhang2020towards}
Zhang, H., Chen, H., Xiao, C., Gowal, S., Stanforth, R., Li, B., Boning, D.,
  and Hsieh, C.-J.
\newblock Towards stable and efficient training of verifiably robust neural
  networks.
\newblock In \emph{International Conference on Learning Representations},
  2020{\natexlab{b}}.
\newblock URL \url{https://openreview.net/forum?id=Skxuk1rFwB}.

\bibitem[Zhang et~al.(2022)Zhang, Wu, and Huang]{zhang2022many}
Zhang, H., Wu, Y., and Huang, H.
\newblock How many data are needed for robust learning?
\newblock \emph{arXiv preprint arXiv:2202.11592}, 2022.

\bibitem[Zheng et~al.(2015)Zheng, Shen, Tian, Wang, Wang, and
  Tian]{zheng2015scalable}
Zheng, L., Shen, L., Tian, L., Wang, S., Wang, J., and Tian, Q.
\newblock Scalable person re-identification: A benchmark.
\newblock In \emph{Proceedings of the IEEE international conference on computer
  vision}, pp.\  1116--1124, 2015.

\bibitem[Zhou et~al.(2020)Zhou, Niu, Wang, Zhang, and Hua]{zhou2020adversarial}
Zhou, M., Niu, Z., Wang, L., Zhang, Q., and Hua, G.
\newblock Adversarial ranking attack and defense.
\newblock In \emph{Computer Vision--ECCV 2020: 16th European Conference,
  Glasgow, UK, August 23--28, 2020, Proceedings, Part XIV 16}, pp.\  781--799.
  Springer, 2020.

\end{thebibliography}
\bibliographystyle{icml2022}

\newpage
\appendix
\onecolumn
  
\section{Proof of Lemma \ref{thm:2}}\label{sec:proof3.3}
  \begin{proof} Recall that $g(x):= \mathbb{E}_{z\sim\mathcal{N}(0,\sigma^2I_d)}[f(x+z)]=\int_{\mathbb{R}^d}f(x+z)p(z)dz$, where $p$ is the probability density function of $\mathcal{N}(0,\sigma^2I_d)$. The intuition of this proof is to seek the maximum discrepancy of the expectation $\mathbb{E}_{z\sim\mathcal{N}(0,\sigma^2I_d)}[f(x+z)]$ and $\mathbb{E}_{z\sim\mathcal{N}(0,\sigma^2I_d)}[f(y+z)]$ for arbitrary samples $x,y$.
  \begin{align*}
      ||g(x)-g(y)||_2 &= ||\int_{\mathbb{R}^d}f(x+z)p(z)dz-\int_{\mathbb{R}^d}f(y+z)p(z)dz||_2\\
      &=||\int_{\mathbb{R}^d}f(z)p(z-x)dz-\int_{\mathbb{R}^d}f(z)p(z-y)dz||_2\\
      &=||\int_{\mathbb{R}^d}f(z)(p(z-x)-p(z-y))dz||_2\\
      & = ||\int_{\mathbb{R}^d}f(z)(p(z-x)-p(z-y))_+dz+\int_{\mathbb{R}^d}f(z)(p(z-x)-p(z-y))_-dz||_2\\
      &\leq ||\int_{\mathbb{R}^d}f(z)(p(z-x)-p(z-y))_+dz||_2+||\int_{\mathbb{R}^d}f(z)(p(z-x)-p(z-y))_-dz||_2\\
      &\leq F(|\int_{\mathbb{R}^d}(p(z-x)-p(z-y))_+dz|+|\int_{\mathbb{R}^d}(p(z-x)-p(z-y))_-dz|)\\
      &=F(|\int_{\mathbb{R}^d}(p(z)-p(z+x-y))_+dz|+|\int_{\mathbb{R}^d}(p(z)-p(z+x-y))_-dz|)
  \end{align*}
  Now we need to calculate $\int_{\mathbb{R}^d}(p(z)-p(z+x-y))_+dz$ explicitly, as $p(z) = \frac{1}{(2\pi\sigma^2)^{d/2}}\exp(-\frac{z^Tz}{2\sigma})$
  \begin{align*}
  \int_{\mathbb{R}^d}(p(z)-p(z+x-y))_+dz &= \frac{1}{(2\pi\sigma^2)^{d/2}}\int_{\mathbb{R}^d}(\exp(-\frac{z^Tz}{2\sigma})-\exp(-\frac{(z+x-y)^T(z+x-y)}{2\sigma}))_+dz
  \end{align*}
  By solving $\exp(-\frac{z^Tz}{2\sigma})-\exp(-\frac{(z+x-y)^T(z+x-y)}{2\sigma})\geq 0$ we have $z\in\{z^T(x-y)\leq\frac{1}{2}(x-y)^T(x-y)\}:=D$.
  As Gaussian distribution is $\ell_2$ spherically symmetric, 
  we can make a unitary transformation such that $x-y$ located on the first axis in the $\mathbb{R}^d$ space, in this case $D = \{z_1\leq\frac{1}{2}||x-y||_2\}$ assuming w.l.o.g $x\geq y$, we have
  \begin{align*}
  \int_{\mathbb{R}^d}(p(z)-p(z+x-y))_+dz &= \frac{1}{(2\pi\sigma^2)^{d/2}}\int_{D}\exp(-\frac{z_1^2+\sum_{i=2}^dz_i^2}{2\sigma})-\exp(-\frac{(z_1+||x-y||_2)^2+\sum_{i=2}^dz_i^2}{2\sigma})dz\\
  &= \frac{1}{(2\pi\sigma^2)^{1/2}}\int_{z_1\leq \frac{||x-y||_2}{2}}\exp(-\frac{z_1^2}{2\sigma})-\exp(-\frac{(z_1+||x-y||_2)^2}{2\sigma})dz_1\\
  &=\Phi(\frac{||x-y||_2}{2\sigma})-\Phi(-\frac{||x-y||_2}{2\sigma})
  \end{align*}
  where $\Phi(x) = \frac{1}{(2\pi)^{1/2}}\int_{-\infty}^{x}\exp(-\frac{z^2}{2})dz$ is the cumulative density function of a standard normal distribution. Anogously we have 
  $$\int_{\mathbb{R}^d}(p(z)-p(z+x-y))_-dz = \Phi(-\frac{||x-y||_2}{2\sigma})-\Phi(\frac{||x-y||_2}{2\sigma})$$
  Thus 
  \begin{align*}
      ||g(x)-g(y)||_2 &=F(|\int_{\mathbb{R}^d}(p(z)-p(z+x-y))_+dz|+|\int_{\mathbb{R}^d}(p(z)-p(z+x-y))_-dz|)\\
      &=2F(\Phi(\frac{||x-y||_2}{2\sigma})-\Phi(-\frac{||x-y||_2}{2\sigma}))\\
  \end{align*}
  \end{proof}

  \section{Proof of Lemma \ref{prop:2}}\label{sec:proof3.8}
  \begin{proof} Recall $R_x$ is the subset of the reference set $R$ in which the samples have the same label as $x$. With the estimated classifier $\hat{g}$, denote the nearest embedding of sample $x$ in $R_x$ by $x^+$, and the nearest embedding of sample $x$ in $R_x$ by $x^-$, we have 
  $$d(x;\hat{g}) = ||\hat{g}(x)-\hat{g}(x^-)||_2-||\hat{g}(x)-\hat{g}(x^+)||_2$$
  According to \autoref{thm:3}, 
$$\mathbb{P}(||g(x)-\hat{g}(x)||_2>\epsilon)\leq(k+1)\exp(-\frac{3\epsilon^2n}{8F^2}),\forall x\in\mathbb{R}^d$$
  thus with probability at least $1-\alpha$
  $$||g(x)-\hat{g}(x)||_2\leq \sqrt{8F^2\ln(\frac{k+1}{\alpha})/{3n}},\forall x\in\mathbb{R}^d$$
  So we can bounded the difference between $||g(x)-g(y)||_2$ and $||\hat{g}(x)-\hat{g}(y)||_2$ for arbitrary sample $x$ and $y$ by
  \begin{equation}\label{eq:diff}
  \begin{aligned}
      |||g(x)-g(y)||_2-||\hat{g}(x)-\hat{g}(y)||_2|&\leq ||g(x)-g(y)-\hat{g}(x)+\hat{g}(y)||_2\\
      &\leq ||g(x)-\hat{g}(x)||_2+||g(y)-\hat{g}(y)||_2\\
      & = 2\sqrt{8F^2\ln(\frac{k+1}{\alpha})/{3n}}
  \end{aligned}
  \end{equation}
  with probability at least $1-2\alpha$. We now consider $$d(x;g):=\min_{x_2\in R/R_x} ||g(x)-g(x_2)||_2 -  \min_{x_1\in R_x}||g(x)-g(x_1)||_2$$
  Based on \autoref{eq:diff} we have
  $$\min_{x_2\in R/R_x} ||g(x)-g(x_2)||_2\geq ||\hat{g}(x)-\hat{g}(x^*)||_2-2\sqrt{8F^2\ln(\frac{k+1}{\alpha})/{3n}}\geq ||\hat{g}(x)-\hat{g}(x^-)||_2-2\sqrt{8F^2\ln(\frac{k+1}{\alpha})/{3n}}$$
  with probability at least $1-2\alpha$, where $x^*:=arg\min_{x_2\in R/R_x}||g(x)-g(x_2)||$, the second inequality is due to $x^-:=arg\min_{x_2\in R/R_x}||\hat{g}(x)-\hat{g}(x_2)||$. Analogously
  $$\min_{x_1\in R_x}||g(x)-g(x_1)||_2\leq ||g(x)-g(x^+)||_2\leq ||\hat{g}(x)-\hat{g}(x^+)||_2+2\sqrt{8F^2\ln(\frac{k+1}{\alpha})/{3n}}$$
  with probability at least $1-2\alpha$. Thus we have
  \begin{equation}
  \begin{aligned}
  d(x;g)&=\min_{x_2\in R/R_x} ||g(x)-g(x_2)||_2 -  \min_{x_1\in R_x}||g(x)-g(x_1)||_2\\
  &\geq||\hat{g}(x)-\hat{g}(x^-)||_2-2\sqrt{8F^2\ln(\frac{k+1}{\alpha})/{3n}}-\left(||\hat{g}(x)-\hat{g}(x^+)||_2+2\sqrt{8F^2\ln(\frac{k+1}{\alpha})/{3n}}\right)\\
  &=d(x,\hat{g}) - 4\sqrt{8F^2\ln(\frac{k+1}{\alpha})/{3n}}
  \end{aligned}
  \end{equation}
  with probability $1-4\alpha$. Replace $\alpha$ by $\frac{\alpha}{4}$ we obtain \autoref{prop:2}.
  \end{proof}
\section{Additional experiments on CUB200 and CARS196}\label{sec:additional experiments}
\noindent\textbf{Datasets.} We conduct experiments on two popular metric learning benchmarks: CUB200 and CARS196. We follow the setup in the previous work \cite{song2016deep} to split the training and test sets. 
\begin{itemize}
\vspace{-5pt}    
\item \textbf{CUB200-2011}
    contains 200 species of birds and 11,788 images \cite{WahCUB_200_2011}. We use the first 100 species as the training set and the rest as the test set.
\vspace{-5pt} 
\item \textbf{CARS196}
     has 196 models of cars and 16,185 images. \cite{KrauseStarkDengFei-Fei_3DRR2013}. We use the first 98 models as the training set and the rest as the test set.
\end{itemize}

\begin{figure*}[t]
		\begin{minipage}[t]{.5\linewidth}
			\centering
			\includegraphics[height=4cm]{./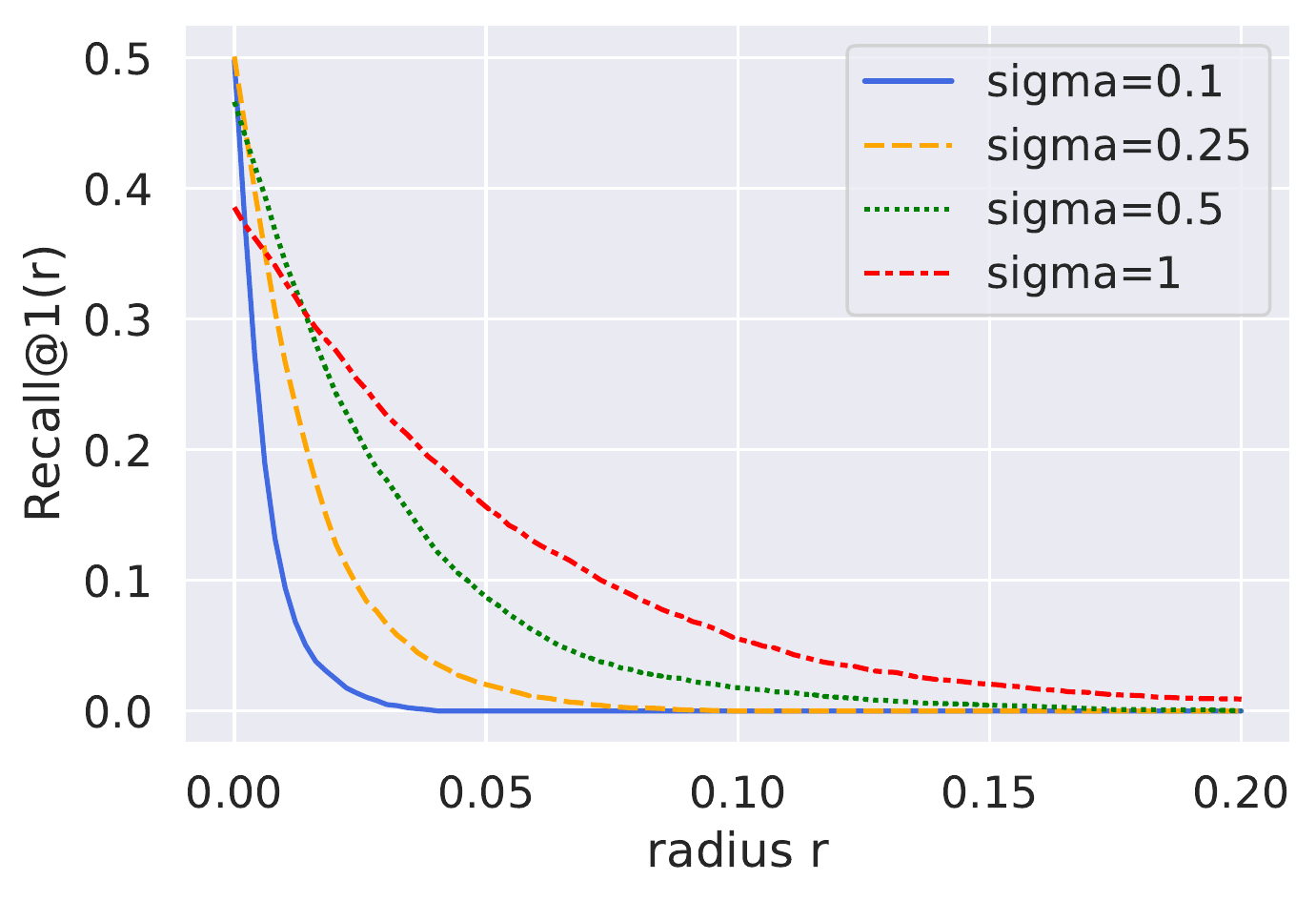}	
			\vspace{-6pt}
			\label{fig:margin_car}
		\end{minipage}
		\begin{minipage}[t]{.5\linewidth}
			\centering
			\includegraphics[height=4cm]{./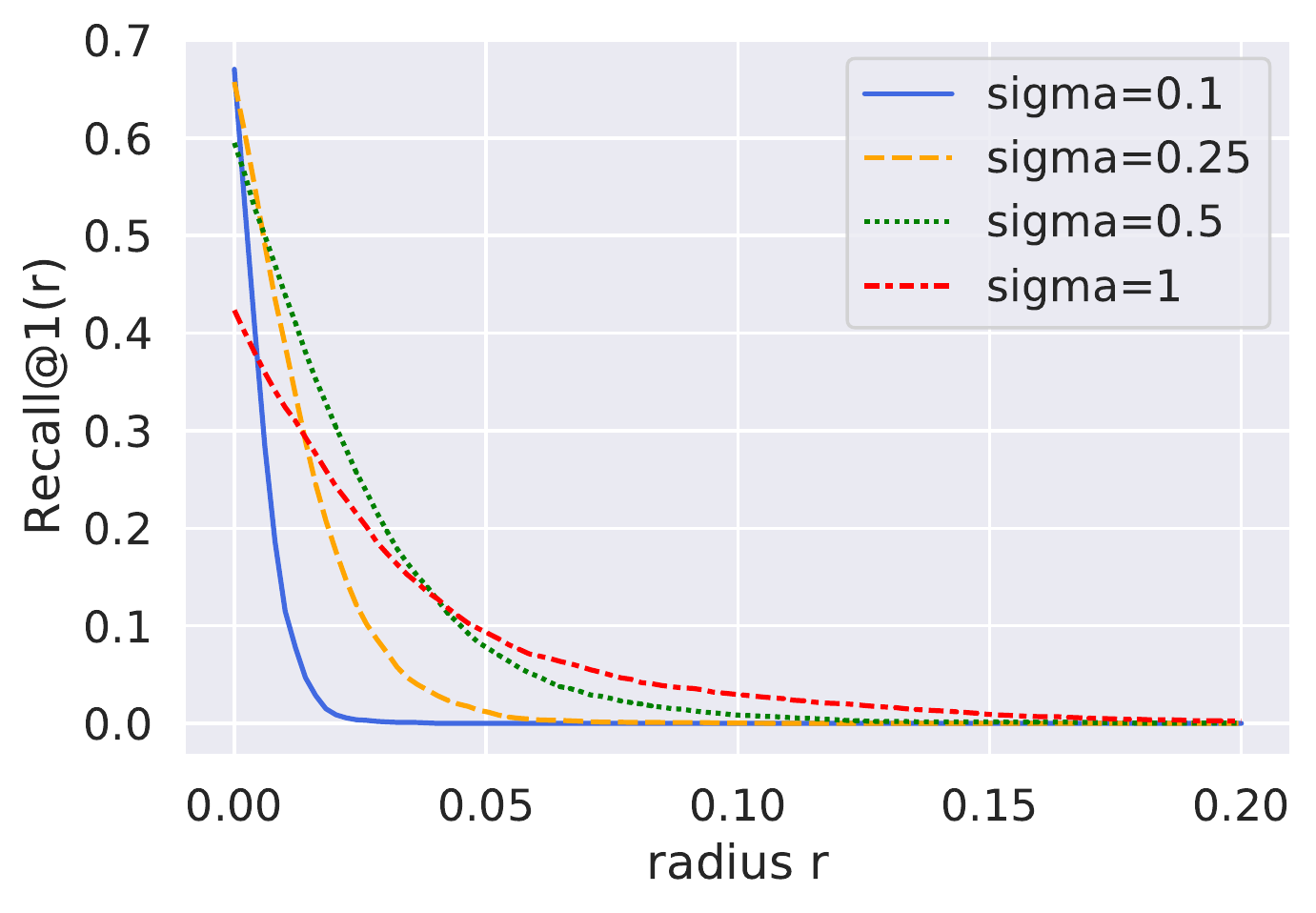}	
			\vspace{-6pt}
			\label{fig:ms_cub}
		\end{minipage}
	\vspace{-0.5cm}
        \caption{Experiments with GDML+RetrievalGuard on image retrieval benchmarks with different $\sigma$. \textbf{Left}: GDML+RetrievalGuard on CUB200-2011. \textbf{Right}: GDML+RetrievalGuard on CARS196.}
        \label{fig:GDML}
\end{figure*}

\end{document}